\newcommand{\rr}{\mathbb{R}}
\renewcommand{\O}{\mathcal{O}}
\newcommand{\ep}{\varepsilon}
\renewcommand{\vec}[1]{\ensuremath{{\bf #1}}}
\newcommand{\la}{\leftarrow}
\newcommand{\ceil}[1]{\left\lceil #1 \right\rceil}
\renewcommand{\bold}[1]{\textbf{#1}}
\newcommand{\parabold}[1]{\smallskip
\noindent\bold{#1}}
\newcommand{\comm}[1]{~~~~~~\text{(#1)}}
\DeclareMathOperator*{\argmin}{arg\,min}
\DeclareMathOperator*{\argmax}{arg\,max}
\theoremstyle{plain}
\newtheorem{claim}[theorem]{Claim}
\newtheorem{proposition}[theorem]{Proposition}
\newcommand{\calB}{\mathcal{B}}
\newcommand{\calF}{\mathcal{F}}
\newcommand{\calL}{\mathcal{L}}
\newcommand{\calM}{\mathcal{M}}
\newcommand{\vecb}{\vec{b}}
\newcommand{\vece}{\vec{e}}
\newcommand{\vecf}{\vec{f}}
\newcommand{\vecp}{\vec{p}}
\newcommand{\vecs}{\vec{s}}
\newcommand{\vecu}{\vec{u}}
\newcommand{\vecw}{\vec{w}}
\newcommand{\vecx}{\vec{x}}
\newcommand{\vecz}{\vec{z}}
\newcommand{\vecll}{\ensuremath{\boldsymbol\ell}}
\newcommand{\bbp}{\mathbf{p}}
\newcommand{\bbx}{\mathbf{x}}
\newcommand{\bby}{\mathbf{y}}
\newcommand{\bbB}{\mathbf{B}}
\newcommand{\ps}{p^*}
\newcommand{\Phims}{\Phi_{\textsf{MS}}}
\newcommand{\Phimsi}{\Phi_{\textsf{\emph{MS}}}}
\newcommand{\Phicpf}{\Phi_{\textsf{CPF}}}
\renewcommand{\mod}[1]{\left| #1 \right|}
\newcommand{\norm}[1]{\left\lVert#1\right\rVert}
\newcommand{\xs}[1]{x^{*, #1}}
\newcommand{\rint}{\mathsf{rint}}
\newcommand{\inner}[2]{\left\langle ~#1 ~,~ #2~\right\rangle}
\newcommand{\KL}{\texttt{KL}}
\newcommand{\KLe}{\emph{\texttt{KL}}}
\newcommand{\vecps}{\vec{\ps}}
\title{Tracing Equilibrium in Dynamic Markets via Distributed
  Adaptation}% Tatonnement and Proportional Response Dynamic}
\titlerunning{Tracing Equilibrium in Dynamic Markets via Distributed
  Adaptation} %optional, in case that the title is too long; the running title should fit into the top page column
\author[1]{Yun Kuen Cheung}
\author[2]{Martin Hoefer}
\author[3]{Paresh Nakhe}
\affil[1]{Max-Planck-Institute for Informatics, Saarland Informatics Campus, Germany\\
  \texttt{ycheung@mpi-inf.mpg.de}}
\affil[2]{Goethe University Frankfurt/Main, Germany\\
  \texttt{mhoefer@cs.uni-frankfurt.de}}
\affil[3]{Goethe University Frankfurt/Main, Germany\\
	\texttt{nakhe@em.uni-frankfurt.de}}
\authorrunning{Y.\,K. Cheung and M. Hoefer and P. Nakhe}
\subjclass{F.2.2 Nonnumerical Algorithms and Problems}% mandatory: Please choose ACM 1998 classifications from http://www.acm.org/about/class/ccs98-html . E.g., cite as "F.1.1 Models of Computation". 
\keywords{Fisher Markets, Peer-to-Peer Networks, Tatonnement, Proportional Response Dynamics, Dynamic Markets}% mandatory: Please provide 1-5 keywords
\begin{document}

\maketitle

% !TEX root = main.tex

\begin{abstract}
Competitive equilibrium is a central concept in economics with numerous applications beyond markets, such as scheduling, fair allocation of goods, or bandwidth distribution in networks. Computation of competitive equilibria has received a significant amount of interest in algorithmic game theory, mainly for the prominent case of Fisher markets. Natural and decentralized processes like tatonnement and proportional response dynamics (PRD) converge quickly towards equilibrium in large classes of Fisher markets. Almost all of the literature assumes that the market is a \emph{static environment} and that the parameters of agents and goods do not change over time. In contrast, many large real-world markets are subject to frequent and dynamic changes.

In this paper, we provide the first provable performance guarantees of discrete-time tatonnement and PRD in markets that are subject to perturbation over time. We analyze the prominent class of Fisher markets with CES utilities and quantify the impact of changes in supplies of goods, budgets of agents, and utility functions of agents on the convergence of tatonnement to market equilibrium. Since the equilibrium becomes a dynamic object and will rarely be reached, our analysis provides bounds expressing the distance to equilibrium that will be maintained via tatonnement and PRD updates. Our results indicate that in many cases, tatonnement and PRD follow the equilibrium rather closely and quickly recover conditions of approximate market clearing. Our approach can be generalized to analyzing a general class of Lyapunov dynamical systems with changing system parameters, which might be of independent interest. 
\end{abstract}

% !TeX root = main.tex

\section{Introduction}

A central concept to understand trade in large markets is the notion of competitive or market equilibrium. The computational aspects of competitive equilibria have been a central theme in algorithmic game theory over the last decade, mainly for the prominent class of Fisher markets. In a Fisher market, there are a set of agents or buyers and a set of divisible goods. Each agent brings a budget of money to the market and wants to buy goods, for which she has an increasing and concave utility function. An equilibrium consists of a vector of prices and an allocation of goods and money such that (1) every buyer purchases the most preferred bundle of goods that she can afford, and (2) market clears (supply equals demand). 

There are successful approaches based on distributed adaptation processes for converging to market equilibria. For example, \emph{tatonnement} is governed by the natural intuition that prices for over-demanded goods increase, while under-demanded goods become cheaper. It provides an explanation how decentralized price adjustment can lead a market into an equilibrium state, thereby providing additional justification for the concept. Recently, several works derived a detailed analysis and proved fast convergence of discrete-time tatonnement in markets~\cite{CodenottiMV2005,CF2008,CCR2012,CCD2013,CheungC2014,CheungC2016}.

It is well-known that network rate control is closely related to Fisher market equilibria~\cite{Kelly1997,KMT1998,KV2002}. Towards this end, distributed market dynamics called \emph{proportional response dynamics} (PRD) were proposed and analyzed in the context of peer-to-peer network sharing~\cite{WuZ2007,LLSB08}. These dynamics avoid the usage of prices and work directly on the exchange and allocation of goods. PRD and its generalizations converge toward market equilibria in the full range of CES Fisher markets~\cite{Zhang2011,BDX2011,CCT2018}.

While tatonnement and PRD rely on dynamic change of prices and allocation, the existing literature assumes that the market and its properties (agents, budgets, utilities, supplies of goods) remain static and unchanged over time. In fact, to the best of our knowledge, all of the existing work on computation of market equilibrium in algorithmic game theory assumes that the market is essentially a \emph{static environment}. In contrast, in many (if not all) applications of markets, the market itself is subject to dynamic change, in the sense that supplies of goods changes over time, agents have different budgets at their disposal that they can spend, or the preferences of agents expressed via utility functions evolve over time. Analyzing and quantifying the impact of dynamic change in markets is critical to understand the robustness of market equilibrium in general, and of price adaptation dynamics like tatonnement in particular. %\MH{More motivation and discussion about novelty/impact: Timely innovation? Upcoming, drastic changes in many economies world-wide? Or rather something with internet advertising?}

In this paper, we initiate the algorithmic study of \emph{dynamic markets} in the form of dynamically evolving environments. Our interest lies in the performance of dynamic adaptation processes like tatonnement. We analyze a discrete-time process, where in each round $t$ tatonnement provides a price for each good, which is then updated using the excess demand for each good. In each round $t$ the excess demand comes from a possibly different, adversarially perturbed market. This dynamic nature of markets gives rise to a number of interesting issues. Notably, even when in each round $t$ the market has a unique equilibrium, over time this equilibrium becomes a dynamic object. As such, exact market equilibria can rarely or never be reached. Instead, we consider how tatonnement can trace the equilibrium by maintaining a small distance (in terms of suitably defined notions of distance), which also results in approximate clearing conditions. For PRD, we apply a similar approach based on adaptation of the allocation of goods.

More formally, we consider the prominent class of Fisher markets with agent utilities that exhibit constant elasticity of substitution (CES). In this versatile framework, we analyze the impact of changes in supply of goods, budgets of agents, and their utility parameters. The adaptation approaches equilibrium conditions, but since the equilibrium is moving, prices and allocations follow and chase the equilibrium point over time. Our analysis provides distance bounds, which can be seen as a quantification of the extent of out-of-equilibrium trade due to the interplay of market perturbation and adaptation of agents.

Technically, the majority of our analysis is concerned with quantifying the impact of perturbation in market parameters on several potential functions that guarantee convergence of the dynamics. The results then follow by a combination with the convergence guarantees for static markets. In fact, this approach constitutes a powerful framework to analyze a large variety of protocols and dynamics that are well-understood in static systems, when these systems become subject to dynamic perturbation. In the Appendix~\ref{sect:applications}, we discuss further examples -- gradient descent for strongly convex functions and diffusion for load balancing in networked systems -- where we quantify the performance of natural dynamics in the presence of system perturbations.

\parabold{Contribution and Outline.} After presenting necessary preliminaries in Section~\ref{sec:prelim}, we describe in Section~\ref{sect:dynamic-market-general-model} the general model for dynamic Fisher markets with CES utilities and a general convergence result. In the subsequent sections, we discuss the insightful case of CES markets with gross-substitute condition. In these markets, the total misspending (absolute excess demand times price) over all goods is a natural parameter to quantify the violation of market clearing conditions. Moreover, one round of tatonnement updates in static markets is known to decrease misspending by a multiplicative factor~\cite{CCR2012}. In Sections~\ref{sect:dynamic-supply} and~\ref{sect:dynamic-utility}, we consider markets where the supply of goods, the budgets of agents, and the utility function of the agents are subject to dynamic perturbation, respectively. We quantify the impact of perturbation on the misspending in the market. These bounds reveal that the change is often a rather mild additive change in misspending. Together with the fact that tatonnement decreases the misspending multiplicatively, we see that the price adaptation is indeed able to incorporate and adapt to the changes quickly. Overall, the dynamics can trace the equilibrium point up to a distance that evolves from the change in a small number of recent rounds.

We can provide similar results for a more general approach for CES markets based on a convex potential function~\cite{CCD2013}. A slight disadvantage is that this potential function does not have an equally intuitive interpretation as the misspending function. On the other hand, it applies to tatonnement in all CES Fisher markets (even without gross-substitute property). The discussion of these results is deferred to Appendix~\ref{app:CPF}.

The technique we apply for markets can be executed much more generally for a class of dynamical systems, which we outline in Section~\ref{sect:general-framework}. These systems have a set of control parameters (e.g., prices in markets, or strategic decisions in games) and system parameters (e.g., supplies or utilities in markets, or payoff values in games). Moreover, these systems admit a Lyapunov function, and a round-based adaptation process for the control parameters (e.g., tatonnement in markets, or best-response dynamics in classes of games) that multiplicatively decreases the Lyapunov function in a single round. Our results provide a bound on the value of the Lyapunov function when the system parameters are subject to dynamic change. We discuss two additional examples of such systems based on minimization of strongly convex functions and network load balancing in Appendix~\ref{sect:applications}. 

In Section \ref{sect:damped-PR} we use a further generalization of the technique based on Bregman divergence to show that proportional response dynamics can successfully trace equilibrium in gross-substitute CES Fisher markets. A more general framework of dynamical systems governed by progress in Bregman divergence is discussed in Appendix~\ref{app:Bregman}. It seems likely that a similar analysis based on our techniques can be conducted for many more sophisticated systems with significantly more complex dynamics.

\parabold{Related Work.} Competitive equilibrium and tatonnement date back to Walras~\cite{walras} in 1874. The existence of equilibrium was established in a non-constructive way for a general setting by Arrow and Debreu~\cite{ArrowDebreu1954} in 1954. Computation of equilibrium has been a central subject in general equilibrium theory. In the past 15 years, there has been impressive progress on devising efficent algorithms for computing equilibria, e.g., using network-flow algorithms~\cite{DPSV2008,Orlin2010,DuanM2015,DuanGM16,BeiGH16,BeiGHM16}, the
ellipsoid method~\cite{Jain2007} or the interior point method~\cite{Ye2008}.

Decentralized adaptation processes such as tatonnement are important due to their simple nature and plausible applicability in real markets. Tatonnement is broadly defined as a process that increases (resp.~decreases) the price of a good if the demand for the good is more (resp.~less) than the supply. The price updates are \emph{distributed}, since the price adjustment for each good is in the direction of its own excess demand and is independent of the demands for other goods.

Arrow, Block and Hurwitz~\cite{ABH1959} showed that a continuous version of tatonnement converges to an equilibrium for markets satisfying the weak gross substitutes (WGS) property. The recent algorithmic advances provide new insights in analyzing tatonnement~\cite{CodenottiMV2005,CCD2013}. Cole and Fleischer~\cite{CF2008} proposed the \emph{ongoing market model}, in which \emph{warehouses} are introduced to allow out-of-equilibrium trade, and prices are updated in tatonnement-style \emph{asynchronously}, to provide an \emph{in-market} process which might capture how real markets work. There has been significant recent interest in further aspects of ongoing markets or asynchronous tatonnement~\cite{CFR2010,CCR2012,CheungC2014,CheungC2016}.

In contrast, proportional response dynamics are a class of distributed algorithms that originated in the literature on network bandwidth sharing. These dynamics work without prices and come with convergence guarantees in classes of static network exchange, where goods have a uniform value~\cite{WuZ2007}. For the special class of Fisher markets with linear utilities these dynamics can be cast as a form of gradient descent~\cite{Zhang2011,BDX2011,CCT2018}.

Notions of games and markets with perturbation and dynamic change are only very recently starting to receive increased interest in algorithmic game theory. For example, recent work has started to quantify the average performance of simple auctions and regret-learning agents in combinatorial auctions with dynamic buyer population~\cite{LykourisST16,FosterLLST16}. In these scenarios, however, equilibria are probabilistic objects and convergence in the static case can only be shown in terms of regret on average in hindsight. Moreover, the main goal is to bound the price of anarchy.

% !TeX root = main.tex

\renewcommand{\vecb}{\mathbf{b}}
\renewcommand{\vecp}{\mathbf{p}}
\renewcommand{\vecu}{\mathbf{u}}
\renewcommand{\vecs}{\mathbf{s}}
\renewcommand{\vecw}{\mathbf{w}}
\renewcommand{\vecx}{\mathbf{x}}
\renewcommand{\vecz}{\mathbf{z}}

%\newcommand{\calM}{\mathcal{M}}

%\newcommand{\AvgPhims}{\overline{\Phi}_{\textsf{MS}}}
%\newcommand{\AvgPhicpf}{\overline{\Phi}_{\textsf{CPF}}}

%\vspace{-0.25cm}

\section{Preliminaries}
\label{sec:prelim}

\parabold{Fisher Markets.}
In a Fisher market, there are $n$ \emph{goods} and $m$ \emph{buyers}. Each buyer $i$ has an amount $b_i$ of \emph{budget}. Buyer $i$ has a \emph{utility function} $u_i$ representing her preference. For bundles $\vecx_i^1 = (x_{ij}^1)_{j=1,\ldots,n}$ and $\vecx_i^2 = (x_{ij}^2)_{j=1,\ldots,n}$, if $u_i(x^1_{i1},\cdots,x^1_{in}) > u_i(x^2_{i1},\cdots,x^2_{in})$, then she prefers $\vecx_i^1$ to $\vecx_i^2$. We denote the vector of budgets by $\vecb =(b_i)_{i=1,\ldots,m}$ and the vector of utility functions by $\vecu = (u_i)_{i=1,\ldots,m}$. Let $B = \sum_i b_i$ be the total budget in the market.

Given a vector $\vecp = (p_j)_{j=1,\ldots,n}$ of (per-unit) \emph{prices} for each good, buyer $i$ requests a \emph{demand bundle} of goods that maximizes her utility function subject to the budget constraint:
$\hat{\vecx}_i = \arg \max \left\{ u_i(\vecx_i) :\nonscript\; \sum_{j=1}^n x_{ij}\cdot p_j~\le~b_i \right\}$.
In general, the $\argmax$ is a set of bundles. In this paper, we concern strictly concave utility function only, for which there is a \emph{unique} demand bundle.
%Let $\hat{u}_i(\vecp) = u_i(\vecx_i^*)$ denote the objective value of the demand bundle.

The sum of amount of good $j$ purchased by all buyers is the \emph{demand for good $j$}, denoted by $x_j = \sum_{i=1}^m \hat{x}_{ij}$. The \emph{supply of good $j$} is $w_j$, and we set $\vecw = (w_j)_{j=1,\ldots,n}$. Let $\vecz =(z_j)_{j=1,\ldots,n}$ be the vector of \emph{excess demand}, i.e., demand minus supply: $z_j = x_j - w_j$.
%\YKC{Since $\circ^*$ is usually used for denoting quantities at equilibrium, I changed $\vecx_i^*$, the demand of buyer $i$, to $\hat{vecx}_i$.}

A pair $(\vecx^*,\vecp^*)$ is a \emph{competitive} or \emph{market equilibrium} if
(1) each vector $\vecx^*_i$ is a demand of agent $i$ at prices $\vecp^*$,
(2) for each good $j$ with $p_j^* > 0$, demand is equal to supply (i.e., $p_j^* \cdot z_j = 0$), and 
(3) for each good $j$ with $p_j^* = 0$, demand is at most supply (i.e., $z_j\le 0$).
An equilibrium price vector $\vecp^*$ is also called a vector of \emph{market clearing prices}.

%\YKC{ Table (for internal use):\\
%There are $n$ goods and $m$ buyers. We use index $i$ to run over buyers, while we use $j,k$ to run over goods.\\
%Each buyer $i$ has utility function $u_i$, in which the coefficients are $a_{ij}$, for goods $1\le j\le n$.\\
%The budget of buyer $i$ is $b_i$.\\
%Price of good $j$ is $p_j$; equilibrium price is $\ps_j$.\\
%Demand of buyer $i$ for good $j$ is denoted by $x_{ij}$, and the total demand for good $j$ is $x_j$.\\
%Supply of good $j$ is $w_j$. The excess demand for good $j$ is $z_j := x_j - w_j$.
%}

\parabold{CES Utility Functions.}
A prominent class of utility functions in markets are utility functions with Constant Elasticity of Substitution (CES). They have the form
$u_i(\vecx_i) = \left( \sum_{j=1}^n a_{ij} \cdot (x_{ij})^{\rho} \right)^{1/\rho}$, where $1\ge \rho > -\infty$ and all $a_{ij} \ge 0$.

For $\rho<1$ and $\rho\neq 0$, buyer $i$'s demand for good $j$ is 
$$\hat{x}_{ij} ~=~ b_i \cdot \frac{(a_{ij})^{1-c} (p_j)^{c-1}}{\sum_{k=1}^n (a_{ik})^{1-c} (p_k)^c},~~\text{where}~~c~=~ \frac{\rho}{\rho-1}\enspace.$$ 
%and the attained utility is 
%$$\hat{u_i}(\vecp) = \frac{b_i}{\left(\sum_{k=1}^n (a_{ik})^{1-c} (p_k)^c\right)^{1/c}}\enspace.$$

%\parabold{Clearing Prices.} Define market equilibrium.

\parabold{Dynamic Markets.}
For CES Fisher markets, tatonnement is known to converge quickly to equilibrium under static market conditions. We here consider a dynamic market where in the beginning of each round $t = 1,\ldots,T$ our tatonnement dynamics propose a vector of prices $\vecp^t$. Dynamic market parameters like budgets $\vecb^t$, supplies $\vecw^t$ and utility functions $\vecu^t$ are manifested, which can be different from their value in previous rounds $0,\ldots,t-1$. Agents request a demand bundle based on the prices $\vecp^t$ and market $\calM^t = (\vecu^t, \vecb^t, \vecw^t)$, which yields a vector of excess demands $\vecz^t$. Then the system proceeds to the next round $t+1$.

We first provide a basic insight that lies at the core of the analysis and manages to lift convergence results for a class of static markets to a bound for dynamic markets from that class. Formally, assume that the following properties hold:
\begin{description}
\item[Potential:] There is a non-negative potential function $\Phi(\calM, \vecp)$, for every market $\calM = (\vecu, \vecb, \vecw)$ and every price vector $\vecp$. It holds $\Phi(\calM, \vecp) = 0$ if and only if $\vecp$ is a vector of clearing prices for market $\calM$.
\item[Price-Improvement:] The price dynamics satisfy $\Phi(\calM, \vecp^t) \le (1-\delta) \cdot \Phi(\calM, \vecp^{t-1})$, for some $1 \ge \delta > 0$ and every market $\calM$.
\item[Market-Perturbation:] The market dynamics satisfy $\Phi(\calM^t, \vecp) \le \Phi(\calM^{t-1}, \vecp) + \Delta^t$, for some values $\Delta^t \ge 0$ and every price vector $\vecp$.
\end{description}

\begin{proposition}\label{prop:marketConverge}
  Suppose the price and market dynamics satisfy the Potential, Price-Improvement, and Market-Perturbation properties. Then 
  \[
    \Phi(\calM^T,\vecp^T) \quad \le \quad (1-\delta)^T\cdot\Phi(\calM^0,\vecp^0) \; + \; \sum_{t=1}^{T-1} (1-\delta)^{T-t} \Delta^t\enspace.
  \]
  Let $\Delta = \max_{t=1,\ldots,T} \Delta^t$, then it follows for any $t \le T$
  \[
    \Phi(\calM^T,\vecp^T) \quad \le \quad \sum_{\tau=t+1}^{T} (1-\delta)^{T-\tau} \Delta^\tau \; + \; \frac{(1-\delta)^{T-t}}{\delta} \cdot \Delta \; + \; (1-\delta)^T\cdot\Phi(\calM^0,\vecp^0) \enspace.
  \]
\end{proposition}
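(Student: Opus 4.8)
The plan is to reduce everything to a single geometric recurrence and then unroll it. First I would isolate what one round does to the potential. During round $t$ the tatonnement step turns $\vecp^{t-1}$ into $\vecp^t$ using the excess demand of the market currently in force, so Price-Improvement gives $\Phi(\calM^{t-1},\vecp^t)\le(1-\delta)\,\Phi(\calM^{t-1},\vecp^{t-1})$; the market parameters then move from $\calM^{t-1}$ to $\calM^t$, and Market-Perturbation applied to the price vector $\vecp^t$ gives $\Phi(\calM^t,\vecp^t)\le\Phi(\calM^{t-1},\vecp^t)+\Delta^t$. Chaining the two inequalities yields the one-step bound
\[
  \Phi(\calM^t,\vecp^t)\ \le\ (1-\delta)\,\Phi(\calM^{t-1},\vecp^{t-1})\ +\ \Delta^t .
\]
(If in the model the perturbation precedes the price step inside a round, the only change is an extra factor $1-\delta$ on $\Delta^t$, and everything below goes through unchanged.)

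Given this recurrence, the first displayed inequality follows by a straightforward induction on $T$: the base case $T=0$ is an equality, and for the inductive step one substitutes the bound for $T-1$ into the one-step recurrence with $t=T$, distributes $(1-\delta)$, and collects terms, using $\Delta^t\ge0$ and $\Phi\ge0$. Equivalently, one simply expands the recurrence into the geometric sum $\Phi(\calM^T,\vecp^T)\le(1-\delta)^T\Phi(\calM^0,\vecp^0)+\sum_t(1-\delta)^{T-t}\Delta^t$, the sum running over the rounds in which a perturbation occurs; only the recurrence and non-negativity are used.

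For the second inequality I would start from the first one, fix $t\le T$, and split the geometric sum at $t$ into a recent block $\sum_{\tau>t}(1-\delta)^{T-\tau}\Delta^\tau$, which is left as is, and an old block $\sum_{\tau\le t}(1-\delta)^{T-\tau}\Delta^\tau$. In the old block I bound $\Delta^\tau\le\Delta$, factor out $(1-\delta)^{T-t}$, and sum the remaining geometric series, $\sum_{k\ge0}(1-\delta)^k=1/\delta$ (valid since $0\le1-\delta<1$; for $\delta=1$ the block is just $\le\Delta(1-\delta)^{T-t}$, matching the convention $1/\delta=1$). Adding back the $(1-\delta)^T\Phi(\calM^0,\vecp^0)$ term gives the claim. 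I do not expect any real obstacle: all the content sits in setting up the one-step recurrence correctly, i.e.\ in being careful that the price update is charged against the \emph{previous} market while the perturbation is charged against the \emph{same} price vector, which is exactly what produces the exponents $(1-\delta)^{T-t}$ in the sums; the induction and the geometric-series estimate are routine.
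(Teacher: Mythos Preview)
Your proposal is correct and matches the paper's own argument essentially line for line: the paper (in the proof of the more general Theorem~\ref{thm:meta}) derives the same one-step recurrence $\Phi(\calM^t,\vecp^t)\le(1-\delta)\Phi(\calM^{t-1},\vecp^{t-1})+\Delta^t$ by chaining Market-Perturbation and Price-Improvement, then unrolls it and, for the second bound, splits the sum at $t$ and controls the old block via $\Delta^\tau\le\Delta$ and a geometric series. The only cosmetic difference is the order in which the two properties are invoked when narrating the one-step bound.
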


The proof follows by a direct application of the three properties. We prove it for a much more general class of dynamic systems with Lyapunov functions in Section~\ref{sect:general-framework}.

Consider the three terms in the latter bound for $\Phi$. The first term captures the impact of \emph{recent} changes to the market. The second term bounds the effect of all \emph{older} changes. The third term decays exponentially over time. Hence, when the process runs long enough, the potential is only affected by \emph{recent changes} of the market, while all older changes can be accumulated into a constant term based on $\Delta$ and $\delta$. Intuitively, the price dynamics follows the evolution of the equilibrium up to a ``distance'' of $\Delta/\delta$ in the potential function value. Hence, if market perturbation $\Delta$ is small and price improvement $\delta$ is large, the process succeeds to maintain market clearing conditions almost exactly.

\section{Dynamic Fisher Markets and Misspending}
\label{sect:dynamic-market-general-model}

%\subsection{Tatonnement and Potential}
For simplicity, we here describe our techniques for CES markets $\calM$ with gross-substitutes property, i.e., when all buyers have CES utilities with $1 > \rho > 0$. For a study of general CES Fisher markets see Appendix~\ref{app:CPF} below. 

The tatonnement process we analyze here updates prices in each round based on the excess demand in the last round, i.e.,
\begin{equation}\label{eq:multi-tat-update}
p_j^t ~\leftarrow~ p_j^{t-1} \cdot \left[~1 + \lambda \cdot \min\left( \frac{x_j^{t-1} - w_j}{w_j}~,~ 1 \right)~\right]\enspace,
\end{equation}
where $\lambda < 1$ is a parameter depending on $\rho$. The \emph{misspending potential function}~\cite{CFR2010,CCR2012} is
$$
\Phims(\calM, \vecp) ~=~ \sum_{j=1}^n p_j \cdot \left|z_j\right|.
$$
The tatonnement process is known to have the Price-Improvement property based on the misspending potential function $\Phims$ in CES markets with $1 > \rho > 0$. More formally, let $\rho_{\max} := \max_{i=1}^m \rho_i$, if $\lambda \le \Theta(1-\rho_{\max})$, then
there exists $1 \ge \delta = \delta(\lambda) > 0$ such that $\Phims(\calM,\vecp^t) \le (1-\delta) \cdot \Phims(\calM,\vecp^{t-1})$.~\cite{CFR2010}
\newcommand{\vep}{\boldsymbol{\varepsilon}}
\newcommand{\vept}{\boldsymbol{\varepsilon}^t}
\newcommand{\veptau}{\boldsymbol{\varepsilon}^\tau}
\newcommand{\supply}{\text{\textsf{supply}}}
\renewcommand{\O}{\mathcal{O}}
\newcommand{\tp}{\tilde{p}\hspace*{0.01in}}
\newcommand{\tps}{\tp^{*}}
\newcommand{\tpst}{\tp^{*,t}}
\newcommand{\gmax}{\gamma_{\max}}
\newcommand{\gmin}{\gamma_{\min}}
\newcommand{\vecss}{\vec{s}\hspace*{0.01in}'}

\newcommand{\Phidms}{\Phi_{\textsf{dMS}}}

\subsection{Dynamic Supply and Budgets}
\label{sect:dynamic-supply}

\parabold{Dynamic Supply.}
Let us first analyze the impact of changing supply on tatonnement dynamics and market clearing conditions. We normalize the initial supply $w_j^1 = 1$ for each good $j$. Suppose that the supplies are then changed additively\footnote{We here study additive change for mathematical convenience. The bounds can be adjusted to hold accordingly for multiplicative change.} by $\vept = (\ep_1^t,\ep_2^t,\cdots,\ep_n^t)$ at time $t$. We parametrize our bounds using the maximum supply change $\kappa = \max_{t} \|\vept\|$. 

\smallskip

\noindent\textbf{Assumption 1.}~Every price is universally bounded by some time-independent constant $P$, i.e., for any $j$ and any time $t$, we have $p_j^t \le P$.

\smallskip

Assumption 1 is made for technical reasons, but it is simple to satisfy by constant parameters of the market. For example, if all initial prices are at most $B$, then since $\lambda < 1$ Assumption 1 holds with $P = 2B$. The main result in this section is as follows.

\begin{proposition}
  For any $t \le T$,
  $$
    \Phimsi(\calM^T,\vecp^T) ~\le~ P \cdot \left(\sum_{\tau=t+1}^{T} (1-\delta)^{T-\tau} \|\veptau\|_1 ~+~ \frac{ (1-\delta)^{T-t}}{\delta}\cdot \kappa \right) + (1-\delta)^T \cdot \Phimsi(\calM^0,\vecp^0)\enspace.
  $$
\end{proposition}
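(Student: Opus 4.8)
The plan is to obtain this statement as a direct instance of Proposition~\ref{prop:marketConverge}, applied to the misspending potential $\Phims$; the work then reduces to verifying its three hypotheses for the supply-perturbed dynamics. Two of them are essentially free. The \emph{Potential} property holds since $\Phims(\calM,\vecp)=\sum_{j=1}^n p_j\mod{z_j}\ge 0$, with equality exactly when $p_j z_j=0$ for every $j$, i.e. when $\vecp$ is a clearing price vector for $\calM$. The \emph{Price-Improvement} property, with a constant $\delta=\delta(\lambda)>0$, is precisely the static-market convergence bound for tatonnement on CES markets with $1>\rho>0$ quoted above from~\cite{CFR2010}, valid whenever $\lambda\le\Theta(1-\rho_{\max})$. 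So the only thing that really needs an argument is \emph{Market-Perturbation}, which I would establish with the perturbation constant $\Delta^t:=P\cdot\norm{\vept}_1$.

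For Market-Perturbation, fix an arbitrary price vector $\vecp$ with $p_j\le P$ for all $j$; by Assumption~1 this includes every price vector that actually occurs along the dynamics, which is all that is used in Proposition~\ref{prop:marketConverge}. The key point is that a buyer's demand bundle $\hat{\vecx}_i=\arg\max\{\,u_i(\vecx_i):\sum_j x_{ij}p_j\le b_i\,\}$ depends only on $\vecp$, $b_i$ and $u_i$, and \emph{not} on the supply vector. Hence when the market changes from $\calM^{t-1}=(\vecu,\vecb,\vecw^{t-1})$ to $\calM^t=(\vecu,\vecb,\vecw^{t-1}+\vept)$, the aggregate demand $x_j=\sum_i\hat{x}_{ij}$ at the fixed prices $\vecp$ is unchanged, so the corresponding excess demands in $\calM^{t-1}$ and $\calM^t$ at $\vecp$, call them $z_j^{t-1}$ and $z_j^t$, satisfy $z_j^t=x_j-(w_j^{t-1}+\ep_j^t)=z_j^{t-1}-\ep_j^t$ for every $j$. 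By the triangle inequality and $p_j\le P$,
\[
\Phims(\calM^t,\vecp)=\sum_{j=1}^n p_j\mod{z_j^{t-1}-\ep_j^t}\;\le\;\sum_{j=1}^n p_j\mod{z_j^{t-1}}+\sum_{j=1}^n p_j\mod{\ep_j^t}\;\le\;\Phims(\calM^{t-1},\vecp)+P\norm{\vept}_1,
\]
which is Market-Perturbation with $\Delta^t=P\norm{\vept}_1$; hence $\Delta:=\max_{t\le T}\Delta^t=P\cdot\max_t\norm{\vept}_1=P\kappa$.

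It then remains only to substitute into the second displayed bound of Proposition~\ref{prop:marketConverge}: replacing $\Delta^\tau$ by $P\norm{\veptau}_1$ in the finite sum and $\Delta$ by $P\kappa$ in the middle term reproduces the asserted inequality verbatim. I do not expect any real obstacle: the whole argument rests on the Market-Perturbation estimate, and in the supply case that estimate is a one-line triangle inequality precisely because supplies do not enter the demand correspondence, so a supply shock propagates additively — and, thanks to $p_j\le P$, without amplification — into the excess-demand vector. By contrast, the budget- and utility-perturbation results (later in Section~\ref{sect:dynamic-supply} and in Section~\ref{sect:dynamic-utility}) genuinely require work, because there the demand bundles themselves shift with the changing parameters, so bounding the induced change in $\Phims$ needs a sensitivity analysis of the CES demand formula rather than this trivial comparison.
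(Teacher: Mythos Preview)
Your proposal is correct and follows essentially the same approach as the paper: verify Market-Perturbation for $\Phims$ via the triangle inequality together with Assumption~1, obtaining $\Delta^t=P\norm{\vept}_1$ and $\Delta=P\kappa$, and then invoke Proposition~\ref{prop:marketConverge}. Your additional remark that demand bundles are independent of supply (so the excess demand shifts by exactly $-\ep_j^t$) and your observation that Market-Perturbation is only needed at the price vectors actually visited by the dynamics are both accurate and make explicit what the paper leaves implicit.
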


\begin{proof}
Consider the misspending potential $\Phims$. Tatonnement satisfies the Price-Improvement property. Hence, to show the result, we establish the Market-Perturbation property. 

Note that the misspending potential can be given by
$$
\Phims(\calM^t,\vecp^t) ~=~ \sum_{j=1}^n p_j^t \cdot \left| x_j^t - 1 - \sum_{\tau=1}^t \ep_j^\tau \right|\enspace.
$$
Hence, by the triangle inequality and Assumption 1,
\begin{align*}
  \Phims(\calM^t,\vecp) ~%&\le~ \sum_{j=1}^n p_j \cdot \left| x_j^t - 1 - \sum_{\tau=1}^{t-1} \ep_j^\tau \right| ~~+~~ \sum_{j=1}^n p_j \cdot \left| \ep_j^t \right| \\
  &=~ \Phims(\calM^{t-1},\vecp) ~+~ \sum_{j=1}^n p_j \cdot \left| \ep_j^t \right| %\\
  ~~~\le~~~ \Phims(\calM^{t-1},\vecp) ~+~ P\cdot \|\vept\|_1\enspace.
\end{align*}
Thus, using Proposition~\ref{prop:marketConverge} with $\Delta^t = P\cdot \|\vept\|_1$ and $\Delta = P \kappa$, the proof follows.
%it holds for any $t \le T$
%$$
%\Phims(\calM^T,\vecp^T) ~\le~  P \cdot \left(\sum_{\tau=t+1}^{T} (1-\delta)^{T-\tau} \|\veptau\|_1 ~+~ \frac{ (1-\delta)^{T-t}}{\delta}\cdot n\kappa \right) + (1-\delta)^T \cdot \Phims(\calM^0,\vecp^0)\enspace.\qedhere
%$$
\end{proof}

%\YKC{Should we discuss (briefly) about the ``running away'' issue here? Here is some suggested text.}

\noindent\textbf{Remark.}$\;$
If the supplies of all goods shrink \emph{multiplicatively} by the
same factor of $(1-\beta)$ per time step, then in markets with CES
utility functions, it is well-known that the equilibrium price of
every good increases by a factor of $(1-\beta)^{-1}$ per time 
step. However, the tatonnement update rule allows the current price to
be increased by a factor of at most $(1+\lambda)$ per time step. Thus,
for plausible \emph{tracing} of equilibrium, $\lambda$ should satisfy $(1+\lambda) > (1-\beta)^{-1}$.

\parabold{Dynamic Budgets.}
%\label{sect:dynamic-budget}
We now analyze the impact of changing buyer budgets on tatonnement dynamics and market clearing conditions. Starting from the initial budgets, the budgets are then changed additively
%\footnote{We here study additive change for mathematical convenience. The bounds can be adjusted to hold accordingly for multiplicative change.} 
by $\vept = (\ep_1^t,\ep_2^t,\cdots,\ep_m^t)$ at time $t$. We
parametrize our bounds using the maximum budget change $\eta =
\max_{t} \|\vept\|_1$. For a proof of the following proposition, see Appendix~\ref{app:misspendingDynamicBudgets}.

\begin{proposition}
\label{prop:misspendingDynamicBudgets}
  For any $t \le T$,
  $$
    \Phimsi(\calM^T,\vecp^T) ~\le~ \sum_{\tau=t+1}^{T} (1-\delta)^{T-\tau} \|\veptau\|_1 ~+~ \frac{(1-\delta)^{T-t}}{\delta}\cdot \eta ~+~ (1-\delta)^T \cdot \Phimsi(\calM^0,\vecp^0)\enspace.
  $$
\end{proposition}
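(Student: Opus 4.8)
The plan is to reuse the scheme from the dynamic-supply proof: the Price-Improvement property of tatonnement with respect to $\Phims$ in gross-substitute CES markets (i.e., $1 > \rho > 0$) is already available from~\cite{CFR2010}, so it suffices to establish the Market-Perturbation property, and the conclusion then follows by plugging the resulting $\Delta^t$ into Proposition~\ref{prop:marketConverge}. Concretely, I will show that Market-Perturbation holds here with $\Delta^t = \|\vept\|_1$, which gives $\Delta = \max_t\|\vept\|_1 = \eta$.

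To bound the perturbation, fix an arbitrary price vector $\vecp$. The markets $\calM^{t-1}$ and $\calM^t$ have identical supplies and utilities and differ only in that buyer $i$'s budget increases by $\ep_i^t$. Let $\hat{x}_{ij}$ and $\hat{x}'_{ij}$ denote buyer $i$'s demand for good $j$ at price $\vecp$ under $\calM^{t-1}$ and $\calM^t$, respectively. Since the CES demand formula is linear in the budget,
$$\hat{x}'_{ij} - \hat{x}_{ij} ~=~ \ep_i^t \cdot \frac{(a_{ij})^{1-c}(p_j)^{c-1}}{\sum_{k=1}^n (a_{ik})^{1-c}(p_k)^c}\enspace.$$
Because supplies are unchanged, $\Phims(\calM^t,\vecp) = \sum_{j} p_j \left| \sum_i \hat{x}'_{ij} - w_j \right|$, and two applications of the triangle inequality give
$$\Phims(\calM^t,\vecp) ~\le~ \Phims(\calM^{t-1},\vecp) ~+~ \sum_{i=1}^m \sum_{j=1}^n p_j \left| \hat{x}'_{ij} - \hat{x}_{ij} \right|\enspace.$$
The heart of the argument is that the inner sum equals $|\ep_i^t|$: substituting the identity above, $\sum_j p_j |\hat{x}'_{ij} - \hat{x}_{ij}| = |\ep_i^t| \cdot \sum_j \frac{(a_{ij})^{1-c}(p_j)^{c}}{\sum_k (a_{ik})^{1-c}(p_k)^c} = |\ep_i^t|$, because the weights $\frac{(a_{ij})^{1-c}(p_j)^{c}}{\sum_k (a_{ik})^{1-c}(p_k)^c}$ are nonnegative and sum to one over $j$ — equivalently, buyer $i$ spends her entire (perturbed) budget. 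Hence $\Phims(\calM^t,\vecp) \le \Phims(\calM^{t-1},\vecp) + \sum_i |\ep_i^t| = \Phims(\calM^{t-1},\vecp) + \|\vept\|_1$, which is Market-Perturbation with $\Delta^t = \|\vept\|_1$. Note that, in contrast to the dynamic-supply case, no factor of $P$ appears: moving budget redistributes \emph{money} rather than quantity of goods, and this monetary change is already weighted by $p_j$ inside $\Phims$, which is exactly what lets the absolute values factor out cleanly.

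The final step is a direct application of Proposition~\ref{prop:marketConverge} with $\Delta^t = \|\vept\|_1$ and $\Delta = \eta$. I do not expect a serious obstacle; the one point that needs a sentence of justification rather than calculation is that the additive perturbations should keep every budget nonnegative (so the CES demand expression stays valid at each round). This is the budget-analogue of Assumption~1 and is easy to guarantee with constant market parameters.
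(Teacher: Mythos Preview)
Your proposal is correct and follows essentially the same argument as the paper: establish Market-Perturbation by applying the triangle inequality to the explicit CES demand formula and observing that the spending weights $\frac{(a_{ij})^{1-c}(p_j)^{c}}{\sum_k (a_{ik})^{1-c}(p_k)^c}$ sum to one, yielding $\Delta^t = \|\vept\|_1$, then invoke Proposition~\ref{prop:marketConverge}. Your added remarks on why no factor of $P$ appears and on keeping budgets nonnegative are reasonable commentary but not part of the paper's proof.
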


% !TEX root = main.tex

\subsection{Dynamic Buyer Utility}\label{sect:dynamic-utility}

In this section, we analyze the impact of changing the parameters $a_{ij}$ in the CES utility functions on tatonnement dynamics and market clearing conditions. Starting from the initial utility values, each $a_{ij}$ can in each round be changed by some multiplicative factor $\gamma_{ij}^t$. Let $\gamma^t = \max_{i,j} ((\gamma_{ij}^t)^{\frac{1}{1-\rho}},(1/\gamma_{ij}^t)^{\frac{1}{1-\rho}})$ and $\gamma = \max_t \gamma^t$. 

\begin{proposition}
  For any $t \le T$,
  \begin{align*}
    \Phimsi(\calM^T,\vecp^T) ~\le&~ B \cdot \left( \sum_{\tau=t+1}^{T}
      (1-\delta)^{T-\tau} \cdot \frac{2(\gamma^\tau - 1)}{\gamma^\tau
        + 1} ~+~ \frac{(1-\delta)^{T-t}}{\delta} \cdot \frac{2(\gamma
        - 1)}{\gamma + 1} \right) \\
&~+~ (1-\delta)^T \cdot \Phimsi(\calM^0,\vecp^0)\enspace.
\end{align*}
\end{proposition}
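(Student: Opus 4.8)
The plan is to mirror the structure of the two preceding propositions: invoke the Price-Improvement property for tatonnement (which holds for $\Phims$ in gross-substitute CES markets with the given $\delta$), and then establish the Market-Perturbation property with $\Delta^t = B \cdot \frac{2(\gamma^t - 1)}{\gamma^t + 1}$ and $\Delta = B \cdot \frac{2(\gamma - 1)}{\gamma + 1}$, after which Proposition~\ref{prop:marketConverge} immediately yields the stated bound. So the entire content is: \emph{perturbing each $a_{ij}$ by a multiplicative factor $\gamma_{ij}^t$ changes $\Phims(\calM, \vecp)$ by at most $B \cdot \frac{2(\gamma^t-1)}{\gamma^t+1}$ for every fixed price vector $\vecp$.}

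First I would write out $\Phims(\calM, \vecp) = \sum_j p_j |z_j| = \sum_j p_j \bigl| \sum_i \hat{x}_{ij} - w_j \bigr|$ and recall that since $\sum_j p_j \hat{x}_{ij} = b_i$ (buyers spend their full budget under gross substitutes), misspending can be rewritten in the ``spending'' form $\Phims(\calM,\vecp) = \sum_j \bigl| \sum_i b_i \cdot s_{ij} - p_j w_j \bigr|$ where $s_{ij} = p_j \hat{x}_{ij}/b_i$ is the fraction of buyer $i$'s budget spent on good $j$, i.e. $s_{ij} = \frac{(a_{ij})^{1-c}(p_j)^{c}}{\sum_k (a_{ik})^{1-c}(p_k)^{c}}$. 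The key observation is that only the $s_{ij}$ depend on the utility parameters, the prices and budgets are fixed, and $\sum_j s_{ij} = 1$. By the triangle inequality, $\bigl| \Phims(\calM^t, \vecp) - \Phims(\calM^{t-1},\vecp) \bigr| \le \sum_j \bigl| \sum_i b_i (s_{ij}^t - s_{ij}^{t-1}) \bigr| \le \sum_i b_i \sum_j | s_{ij}^t - s_{ij}^{t-1} | = \sum_i b_i \cdot \| \vecs_i^t - \vecs_i^{t-1} \|_1$. So it suffices to bound $\| \vecs_i^t - \vecs_i^{t-1} \|_1$ by $\frac{2(\gamma^t - 1)}{\gamma^t + 1}$ for each buyer $i$, since $\sum_i b_i = B$.

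The main obstacle — the one genuine computation — is this per-buyer $\ell_1$ bound. Here I would exploit that $1 - \rho > 0$ so $1 - c = \frac{1}{1-\rho} > 0$ (recall $c = \rho/(\rho-1) < 0$ for $0 < \rho < 1$), hence each $a_{ij}$ being scaled by $\gamma_{ij}^t$ scales the weight $(a_{ij})^{1-c}$ by $(\gamma_{ij}^t)^{\frac{1}{1-\rho}} \in [1/\gamma^t, \gamma^t]$ by definition of $\gamma^t$. Thus if I write $\nu_{ij} := (a_{ij})^{1-c}(p_j)^{c}$ for the old (round $t-1$) unnormalized weights, the new weights are $\nu_{ij}' = g_{ij}\nu_{ij}$ with $g_{ij} \in [1/\gamma^t, \gamma^t]$, and $s_{ij}^{t-1} = \nu_{ij}/N_i$, $s_{ij}^t = g_{ij}\nu_{ij}/N_i'$ with $N_i = \sum_k \nu_{ik}$, $N_i' = \sum_k g_{ik}\nu_{ik}$. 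Now $\| \vecs_i^t - \vecs_i^{t-1}\|_1 = \sum_j \bigl| \frac{g_{ij}\nu_{ij}}{N_i'} - \frac{\nu_{ij}}{N_i} \bigr|$; the standard bound for such a reweighting is that this total variation distance is at most $\frac{\max_j g_{ij} / \min_j g_{ij} - 1}{\max_j g_{ij}/\min_j g_{ij} + 1}$ times $2$, i.e. with $r := \max g_{ij}/\min g_{ij} \le (\gamma^t)^2$... but actually the cleaner route giving exactly the claimed constant is: the partition $J^+ = \{ j : s_{ij}^t \ge s_{ij}^{t-1}\}$, $J^- $ its complement, gives $\|\vecs_i^t - \vecs_i^{t-1}\|_1 = 2\sum_{j \in J^+}(s_{ij}^t - s_{ij}^{t-1}) = 2\bigl(\sum_{J^+}s_{ij}^t - \sum_{J^+}s_{ij}^{t-1}\bigr)$, and one bounds $\sum_{J^+}s_{ij}^t \le \gamma^t \sum_{J^+}\nu_{ij} / N_i'$ and $N_i' \ge \frac{1}{\gamma^t}\sum_{J^+}\nu_{ij} + \gamma^t\sum_{J^-}\nu_{ij}$ (using the worst-case split of the multipliers), so that with $\alpha := \sum_{J^+}\nu_{ij}/N_i = \sum_{J^+}s_{ij}^{t-1}$ one gets $\sum_{J^+}s_{ij}^t \le \frac{\gamma^t \alpha}{\alpha/\gamma^t + \gamma^t(1-\alpha)}$ and the difference $\frac{\gamma^t\alpha}{\alpha/\gamma^t + \gamma^t(1-\alpha)} - \alpha$ is maximized over $\alpha\in[0,1]$ at a value that works out to $\frac{\gamma^t - 1}{\gamma^t + 1}$, yielding $\|\vecs_i^t - \vecs_i^{t-1}\|_1 \le \frac{2(\gamma^t-1)}{\gamma^t+1}$. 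Summing against $b_i$, the Market-Perturbation property holds with $\Delta^t = B\cdot\frac{2(\gamma^t-1)}{\gamma^t+1}$, and plugging into Proposition~\ref{prop:marketConverge} finishes the proof. I'd double-check the one-variable optimization for the exact constant, but the shape $\frac{2(\gamma-1)}{\gamma+1}$ strongly suggests this is the intended route (it is exactly the total variation between two Bernoulli-type distributions reweighted by factor $\gamma$ and $1/\gamma$).
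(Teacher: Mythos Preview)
Your plan and the reduction to a per-buyer $\ell_1$ bound are exactly what the paper does. The difference lies in how you extract the constant $\frac{2(\gamma^t-1)}{\gamma^t+1}$, and there your argument slips.

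The inequality $N_i' \ge \frac{1}{\gamma^t}\sum_{J^+}\nu_{ij} + \gamma^t\sum_{J^-}\nu_{ij}$ is false: for $j\in J^-$ you only know $g_{ij}\ge 1/\gamma^t$, not $g_{ij}\ge\gamma^t$ (take $g_{ij}=1/\gamma^t$ for all $j$ as a counterexample). Consequently your expression $\frac{\gamma^t\alpha}{\alpha/\gamma^t + \gamma^t(1-\alpha)}$ is not a valid upper bound for $\sum_{J^+}s_{ij}^t$, and even if it were, it would not optimize to the target constant (it blows up as $\alpha\to 1$, so you would be forced to reintroduce the constraint linking $\alpha$ to the actual sign pattern that you dropped). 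The issue is that you bounded numerator and denominator separately even though the $J^+$-terms appear in both.

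The fix is easy and in fact cleaner than the paper's route: for fixed $J^+$ the map $(g_j)_j \mapsto \frac{\sum_{J^+}g_j\nu_j}{\sum_j g_j\nu_j}$ is monotone in each coordinate (increasing for $j\in J^+$, decreasing for $j\in J^-$), so its maximum over $g_j\in[1/\gamma^t,\gamma^t]$ is
\[
\sum_{J^+}s_{ij}^t ~\le~ \frac{\gamma^t\alpha}{\gamma^t\alpha + (1-\alpha)/\gamma^t},\qquad \alpha=\sum_{J^+}s_{ij}^{t-1}.
\]
Then $\|\vecs_i^t-\vecs_i^{t-1}\|_1 \le 2\bigl(\tfrac{\gamma^t\alpha}{\gamma^t\alpha + (1-\alpha)/\gamma^t} - \alpha\bigr) = \dfrac{2\alpha(1-\alpha)(\gamma^t-1/\gamma^t)}{\gamma^t\alpha + (1-\alpha)/\gamma^t}$, which is maximized at $\alpha = 1/(\gamma^t+1)$ with value exactly $\frac{2(\gamma^t-1)}{\gamma^t+1}$; this is the same one-variable optimization the paper arrives at.

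For comparison, the paper reaches that expression by a more elaborate device: it proves a separate lemma showing, via an iterative consistency argument, that the actual multipliers $(\gamma_{ij}^t)^{1-c}$ may be replaced by an extremal vector taking only the values $\gamma^t$ and $1/\gamma^t$ that is compatible with the resulting sign pattern, and only then computes the bound. Your direct monotonicity step (once corrected) is a legitimate shortcut for that lemma.
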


\begin{proof}
To show the result, we establish the Market-Perturbation property. Note that the misspending potential can be given by
\begin{align*}
\Phims(\calM^t,\vecp^t) ~=~ \sum_{j=1}^n p_j^t \cdot \mod{x_j^t - w_j}
                        &=~ \sum_{j=1}^n p_j^t \cdot \left| \sum_{i=1}^m b_i \cdot \frac{(a_{ij} \prod_{\tau=1}^t\gamma_{ij}^\tau)^{1-c} (p_j)^{c-1}}{\sum_{k=1}^n (a_{ik}\prod_{\tau=1}^t\gamma_{ik}^\tau)^{1-c} (p_k)^c} - w_j \right|\enspace.
\end{align*}

Using $a_{ij}^{t-1} = a_{ij} \prod_{\tau=1}^{t-1}\gamma_{ij}^\tau$ we derive
\begin{align*}
  \Delta^t ~&=~ \Phims(\calM^t,\vecp) - \Phims(\calM^{t-1},\vecp)\\
           ~&=~ \sum_{j=1}^n p_j \cdot \left(\left| \sum_{i=1}^m b_i \cdot \frac{(a_{ij}^{t-1}\gamma_{ij}^t)^{1-c} (p_j)^{c-1}}{\sum_{k=1}^n (a_{ik}^{t-1}\gamma_{ik}^t)^{1-c} (p_k)^c} - w_j \right| ~-~ \left| \sum_{i=1}^m b_i \cdot \frac{(a_{ij}^{t-1})^{1-c} (p_j)^{c-1}}{\sum_{k=1}^n (a_{ik}^{t-1})^{1-c} (p_k)^c} - w_j \right| \right)\\
           ~&\le~ \sum_{j=1}^n p_j \cdot \sum_{i=1}^m b_i \cdot \mod{ \frac{(a_{ij}^{t-1}\gamma_{ij}^t)^{1-c} (p_j)^{c-1}}{\sum_{k=1}^n (a_{ik}^{t-1}\gamma_{ik}^t)^{1-c} (p_k)^c} ~-~ \frac{(a_{ij}^{t-1})^{1-c} (p_j)^{c-1}}{\sum_{k=1}^n (a_{ik}^{t-1})^{1-c} (p_k)^c} }\\
           ~&=~ \sum_{i=1}^m b_i \cdot \sum_{j=1}^n \mod{ \frac{(a_{ij}^{t-1}\gamma_{ij}^t)^{1-c}\, p_j^c }{\sum_{k=1}^n (a_{ik}^{t-1}\gamma_{ik}^t)^{1-c}\, p_k^c} ~-~ \frac{(a_{ij}^{t-1})^{1-c}\, p_j^c }{\sum_{k=1}^n (a_{ik}^{t-1})^{1-c}\, p_k^c} } \enspace.
\end{align*}
For the rest of the proof, we construct an upper bound on the difference of two fractions. Fix a buyer $i$. We set $\alpha_j = \frac{(a_{ij}^{t-1})^{1-c}\, p_j^c}{\sum_{k=1}^n (a_{ik}^{t-1})^{1-c}\, p_k^c}$ and $\beta_j = (\gamma_{ij}^t)^{1-c}$. Moreover, we use $\mu = \gamma^t$ and observe $\mu \ge \beta_j \ge 1/\mu$. We let
\begin{align*}
\Delta s_j ~&=~ \mod{ \frac{(a_{ij}^{t-1}\gamma_{ij}^t)^{1-c}\, p_j^c }{\sum_{k=1}^n (a_{ik}^{t-1}\gamma_{ik}^t)^{1-c}\, p_k^c} ~
-~ \frac{(a_{ij}^{t-1})^{1-c} \, p_j^c }{\sum_{k=1}^n (a_{ik}^{t-1})^{1-c} \, p_k^c} }
~=~ \mod{ \frac{\alpha_j \beta_j}{\sum_k \alpha_k \beta_k} - \alpha_j } \enspace. %\\
%&=~ \mod{ \frac{\alpha_j \beta_j}{\sum_k \alpha_k \beta_k} - \frac{\alpha_j}{\sum_k \alpha_k} } \enspace.
\end{align*}
\begin{lemma}
\label{lem:misspendingDynamicUtility}
There exists a vector $(\beta'_1,\ldots,\beta'_m)$ with 
$$
\beta'_j = \begin{cases}
\mu & \text{ if } \frac{\alpha_j \beta'_j}{\sum_k \alpha_k \beta'_k} \ge \alpha_j \\
1/\mu & \text{ otherwise.}
\end{cases}
$$
such that 
$$
\sum_{j} \Delta s_j ~\le~ \sum_j \mod{ \frac{\alpha_j \beta'_j}{\sum_k \alpha_k \beta'_k} - \alpha_j }\enspace.
$$
\end{lemma}
For a proof of the lemma, see Appendix~\ref{app:misspendingDynamicUtility}. Now let $\beta'$ be a vector as in the previous lemma, let $S = \{j : \beta'_j = \mu\}$ and $R = G \setminus S$. Using $\alpha_S = \sum_{j \in S} \alpha_j$, we obtain
\begin{align*}
\sum_j \Delta s_j ~&\le~ \left( \sum_{j \in S} \frac{\alpha_j
    \mu}{\sum_{k \in S} \alpha_k \mu + \sum_{i \in R} \alpha_i / \mu}
  - \sum_{j \in S} \alpha_j \right) \\ 
&\hspace{0.5cm} ~+~  \left( \sum_{j \in R} \alpha_j - \sum_{j \in R} \frac{\alpha_j/\mu}{\sum_{k \in S} \alpha_k \mu + \sum_{i \in R} \alpha_i /\mu}  \right) \\
~&=~ \left( \frac{\mu \alpha_S}{\mu \alpha_S + (1 - \alpha_S) / \mu} - \alpha_S \right) ~+~  \left( 1 - \alpha_S - \frac{(1-\alpha_S)/\mu}{\mu \alpha_S + (1-\alpha_S) /\mu} \right) \\
~&=~ 1 - 2\alpha_S + \frac{ \left(\mu + \frac{1}{\mu}\right) \alpha_S - \frac{1}{\mu}}{\left(\mu-\frac{1}{\mu}\right) \alpha_S + \frac{1}{\mu}} \enspace.
\end{align*}
This expression is maximized at $\alpha_S = \frac{1}{\mu + 1}$ and yields an upper bound of $\sum_j \Delta s_j \le \frac{2(\mu - 1)}{\mu + 1}$. Thus, using Proposition~\ref{prop:marketConverge} with $\Delta^t \leq B \cdot \frac{2(\gamma^t - 1)}{\gamma^t + 1}$ and $\Delta \le B \cdot \frac{2(\gamma - 1)}{\gamma + 1}$, we are done.
%it holds for any $t \le T$
%%
%\begin{align*}
%\Phims(\calM^T,\vecp^T) ~&\le~ B \cdot \left( \sum_{\tau=t+1}^{T} (1-\delta)^{T-\tau} \cdot \frac{2(\gamma^\tau - 1)}{\gamma^\tau + 1} ~+~ \frac{(1-\delta)^{T-t}}{\delta} \cdot \frac{2(\gamma - 1)}{\gamma + 1}  \right)\\
%&\quad+~ (1-\delta)^T \cdot \Phims(\calM^0,\vecp^0)\enspace.\qedhere
%\end{align*}
\end{proof}

\section{Parametrized Lyapunov Dynamical Systems}\label{sect:general-framework}

In this section, we prove a general theorem, which includes as special case the bound shown for markets in Proposition~\ref{prop:marketConverge}. Our focus here are dynamical systems, in which time is discrete and represented by non-negative integers. Note, however, that the formulation below can be easily generalized to settings with continous time.

We assume that the dynamical system can be described by two sets of parameters. There is a set of \emph{control variables} that can be adjusted by an algorithm or a protocol. In addition, there is a set of \emph{system parameters} that can change in each round in an adversarial way. For example, in our analysis of markets in the previous section, the control variables are prices of goods, whereas system parameters can be supplies of goods, budgets of agents, or utility parameters. As another example, in games the control variables could be the strategy choices of agents, whereas system parameters are utility and payoff values of states. More generally, control variables could also be bird headings in a bird flock, while system parameters are wind direction or velocity, etc.

The classical theory of dynamical systems often studies the behaviour of systems with static system parameters. However, dynamical systems with varying system parameters often arise in practice (see Appendix \ref{sect:applications} for some examples). Here, we propose a simple framework to analyze Lyapunov dynamical systems with varying system parameters. More formally, the dynamical system $L$ is described by an initial \emph{control variable vector} $\vecp^0 \in \rr^n$ and an \emph{evolution rule} $F:\rr^n \rightarrow \rr^n$, which specifies how the control variables are adjusted. For each time $t\ge 1$, we have $\vecp^t ~=~ F(\vecp^{t-1})$.

The system $L$ is called a \emph{Lyapunov dynamical system} (LDS) if it admits a Lyapunov function $G:\rr^n\rightarrow \rr^+$ such that
\begin{enumerate}
	\item[(a)] for every fixed point (equilibrium) $\vecp$ of $F$ with $F(\vecp) = \vecp$ it holds $G(\vecp) = 0$;
	\item[(b)] for every $\vecp\in \rr^n$ it holds $G(F(\vecp)) ~\le~ G(\vecp)$.
\end{enumerate}
An LDS $L$ is called \emph{linearly converging} (LCLDS) if it further satisfies
\begin{enumerate}
	\item[(c)] there exists a \emph{decay parameter} $\delta = \delta(L) > 0$ such that for any $\vecp\in \rr^n$,
	$G(F(\vecp)) ~\le~ (1-\delta) \cdot G(\vecp)$.
\end{enumerate}

Let $\calL$ be a family of dynamical systems, while each dynamical system $L_{\vecs}\in \calL$ is parametrized by a \emph{system parameter} vector $\vecs\in \rr^d$. The family $\calL$ is called a family of \emph{parametrized, linearly converging LDS} (PLCLDS) if each $L_{\vecs} \in \calL$ is an LCLDS and $\delta(\calL) = \inf_{L_{\vecs}\in \calL} \delta (L_{\vecs}) > 0$. For each $L_{\vecs}$, we denote its evolution rule by $F_{\vecs}$ and its Lyapunov function by $G_{\vecs}$.

In many scenarios, particularly in agent-based dynamical systems, the control variables $\vecp$ change by the evolution rule that expresses, e.g., the sequential decisions of the agents, but the system parameters $\vecs$ can change in an exogenous (or even adversarial) way. However, in many cases the impact of changes in a single time step is rather mild. The following theorem states our \emph{recovery result} by relating the Lyapunov value to the magnitude of changes in each step. Intuitively, it characterizes the ``distance'' that the evolution rule maintains to a fixed point over the course of the dynamics.

\begin{theorem}\label{thm:meta}
Let $\calL$ be a PLCLDS with $\delta\equiv \delta(\calL) > 0$, let
$\vecs^0,\vecs^1,\ldots,\vecs^T$ denote the system parameter vectors
at times $0,1,\cdots,T$, respectively, and let $\Phi(\vecs^t,\vecp^t)
= G_{\vecs^t}(\vecp^t)$. Suppose that for every $t=1,\ldots,T$ the
system parameters $\vecs^{t-1},\vecs^t \in\rr^d$ invoke a change such
that for every $\vecp\in\rr^n$, we have
$
\Phi(\vecs^t, \vecp) ~\le~  \Phi(\vecs^{t-1},\vecp) + \Delta^t.
$
The initial control variable vector is denoted by $\vecp^0$, and the system evolves such that for every $t\ge 1$ we have $\vecp^t = F_{\vecs_{t-1}}(\vecp^{t-1})$. Then
$$
\Phi(\vecs^T,\vecp^T) ~\le~ (1-\delta)^T \cdot \Phi(\vecs^0,\vecp^0) ~+~ \sum_{t=1}^T (1-\delta)^{T-t} \cdot \Delta^t\enspace.
$$
Let $\Delta = \max_{t=1,\ldots,T} \Delta^t$, then it follows for any $t \le T$
\[
  \Phi(\vecs^T,\vecp^T) \quad \le \quad \sum_{\tau=t+1}^{T} (1-\delta)^{T-\tau} \Delta^\tau \; + \; \frac{(1-\delta)^{T-t}}{\delta} \cdot \Delta \; + \; (1-\delta)^T\cdot\Phi(\vecs^0,\vecp^0) \enspace.
\]
\end{theorem}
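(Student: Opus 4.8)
The plan is to establish the two displayed inequalities by a straightforward induction on $T$, driven entirely by the interplay of properties (c) and the Market-Perturbation hypothesis. First I would set up notation: write $\Phi^t := \Phi(\vecs^t,\vecp^t) = G_{\vecs^t}(\vecp^t)$ and recall that the evolution step $\vecp^t = F_{\vecs^{t-1}}(\vecp^{t-1})$ together with the LCLDS decay property gives $G_{\vecs^{t-1}}(\vecp^t) \le (1-\delta)\cdot G_{\vecs^{t-1}}(\vecp^{t-1}) = (1-\delta)\cdot \Phi^{t-1}$, since $\delta(\calL) = \inf_{L_{\vecs}} \delta(L_{\vecs})$ bounds the decay uniformly over the family. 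Then the system-parameter change from $\vecs^{t-1}$ to $\vecs^t$, applied at the point $\vecp = \vecp^t$, gives $\Phi^t = \Phi(\vecs^t,\vecp^t) \le \Phi(\vecs^{t-1},\vecp^t) + \Delta^t = G_{\vecs^{t-1}}(\vecp^t) + \Delta^t \le (1-\delta)\cdot\Phi^{t-1} + \Delta^t$. This single-step recurrence is the crux; everything else is bookkeeping.

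From the recurrence $\Phi^t \le (1-\delta)\Phi^{t-1} + \Delta^t$, I would unroll it: a trivial induction (or just summing the geometric telescoping) yields $\Phi^T \le (1-\delta)^T \Phi^0 + \sum_{t=1}^T (1-\delta)^{T-t}\Delta^t$, which is the first claimed bound. For the second bound, I would split the sum $\sum_{t=1}^T$ at the chosen index $t$: the tail $\sum_{\tau=t+1}^T (1-\delta)^{T-\tau}\Delta^\tau$ is kept as is, and for the head $\sum_{\tau=1}^t (1-\delta)^{T-\tau}\Delta^\tau$ I would bound each $\Delta^\tau \le \Delta$ and factor out $(1-\delta)^{T-t}$ to get $(1-\delta)^{T-t}\sum_{\tau=1}^t (1-\delta)^{t-\tau}\Delta \le (1-\delta)^{T-t}\cdot\Delta\cdot\sum_{j\ge 0}(1-\delta)^j = (1-\delta)^{T-t}\cdot\Delta/\delta$, using $\delta>0$ so the geometric series converges (and is an overcount of the finite sum). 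Adding the untouched $(1-\delta)^T\Phi^0$ term gives exactly the second displayed inequality.

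The only subtlety worth flagging — and the one place a careless argument could slip — is making sure the decay property (c) is invoked at the \emph{right} system-parameter index. The evolution rule that produced $\vecp^t$ is $F_{\vecs^{t-1}}$, so the Lyapunov decrease is with respect to $G_{\vecs^{t-1}}$, not $G_{\vecs^t}$; one must then pay the $\Delta^t$ cost to switch the Lyapunov function's parameter from $\vecs^{t-1}$ to $\vecs^t$. Getting the order "first evolve, then perturb" versus "first perturb, then evolve" consistent with the model description is the main thing to be careful about, but since the statement fixes $\vecp^t = F_{\vecs^{t-1}}(\vecp^{t-1})$, the computation above is forced. No genuine obstacle remains; the theorem is a clean consequence of the two structural assumptions, which is precisely why Proposition~\ref{prop:marketConverge} was stated as a special case.
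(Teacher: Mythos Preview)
Your proposal is correct and follows essentially the same approach as the paper: derive the one-step recurrence $\Phi^t \le (1-\delta)\Phi^{t-1} + \Delta^t$ by first applying the decay of $G_{\vecs^{t-1}}$ under $F_{\vecs^{t-1}}$ and then paying $\Delta^t$ to switch to $G_{\vecs^t}$, unroll it for the first bound, and split the resulting sum at index $t$ with a geometric-series overcount of the head for the second bound. Your remark about the ordering (evolve under $\vecs^{t-1}$, then perturb to $\vecs^t$) is exactly the point the paper's chain of inequalities encodes.
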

\begin{proof}
For any time $t\ge 1$,
\begin{align*}
\Phi(\vecs^t,\vecp^t) &~=~ G_{\vecs_t}(\vecp^t) ~\le~ G_{\vecs_{t-1}}(\vecp^t) + \Delta^t \\
&~=~ G_{\vecs_{t-1}}(F_{\vecs_{t-1}}(\vecp^{t-1})) + \Delta^t\\
&~\le~ (1-\delta) \cdot G_{\vecs_{t-1}}(\vecp^{t-1}) + \Delta^t ~~~=~ (1-\delta) \cdot \Phi(\vecs^{t-1},\vecp^{t-1}) + \Delta^t.
\end{align*}
Iterating the above recurrence yields the first result. For the second result, note that 
\begin{align*}
\sum_{\tau=1}^{t} (1-&\delta)^{T-\tau} \Delta^\tau ~\le~ \Delta
(1-\delta)^T \cdot \sum_{\tau=1}^{t} \left(\frac{1}{1-\delta}\right)^{\tau}\\ ~&=~ \Delta \cdot \frac{(1-\delta)^{T+1}}{\delta} \cdot \left(\left(\frac{1}{1-\delta}\right)^{t+1} - \frac{1}{1-\delta}\right)
~<~ \Delta \cdot \frac{(1-\delta)^T}{\delta} \cdot \left(\frac{1}{1-\delta}\right)^t.\qedhere
\end{align*}
\end{proof}
In the scenarios where $\sum_{t=1}^T \Delta^t = O(T^\alpha)$ for small constant $\alpha$,
we have the following corollary.
\begin{corollary}\label{corr:meta}
In the setting of Theorem~\ref{thm:meta}, if $\sum_{t=1}^T \Delta^t ~=~ \O(T^\alpha)$ for some constant $\alpha > 0$, then for any constant $\beta >0$,
$$
\Phi(\vecs^T,\vecp^T) ~\le~ \sum_{\tau=T-\ceil{\frac{\alpha+\beta}{\delta}\log T} + 1}^T \Delta^{\tau} ~+~ \O(T^{-\beta}) ~+~ (1-\delta)^T \cdot \Phi(\vecs^0,\vecp^0).
$$
\end{corollary}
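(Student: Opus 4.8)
The plan is to start from the first inequality of Theorem~\ref{thm:meta}, namely
$\Phi(\vecs^T,\vecp^T) \le (1-\delta)^T\cdot\Phi(\vecs^0,\vecp^0) + \sum_{t=1}^T (1-\delta)^{T-t}\Delta^t$,
and then split the summation $\sum_{t=1}^T (1-\delta)^{T-t}\Delta^t$ at the index $t_0 := T - \ceil{\frac{\alpha+\beta}{\delta}\log T}$. The ``recent'' block $\sum_{\tau=t_0+1}^{T}(1-\delta)^{T-\tau}\Delta^\tau$ is bounded simply by $\sum_{\tau=t_0+1}^{T}\Delta^\tau$, using $(1-\delta)^{T-\tau}\le 1$ and $\Delta^\tau\ge 0$; this is exactly the first term on the right-hand side of the corollary.

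For the ``old'' block $\sum_{\tau=1}^{t_0}(1-\delta)^{T-\tau}\Delta^\tau$, I would pull out the worst decay factor: for $\tau\le t_0$ we have $(1-\delta)^{T-\tau}\le (1-\delta)^{T-t_0} = (1-\delta)^{\ceil{\frac{\alpha+\beta}{\delta}\log T}}$. Using the standard estimate $1-\delta\le e^{-\delta}$ together with $\ceil{\frac{\alpha+\beta}{\delta}\log T}\ge \frac{\alpha+\beta}{\delta}\log T$, this factor is at most $e^{-(\alpha+\beta)\log T} = T^{-(\alpha+\beta)}$. Hence
$\sum_{\tau=1}^{t_0}(1-\delta)^{T-\tau}\Delta^\tau \le T^{-(\alpha+\beta)}\sum_{\tau=1}^{t_0}\Delta^\tau \le T^{-(\alpha+\beta)}\sum_{\tau=1}^{T}\Delta^\tau = T^{-(\alpha+\beta)}\cdot \O(T^\alpha) = \O(T^{-\beta})$,
where the last step invokes the hypothesis $\sum_{t=1}^T\Delta^t = \O(T^\alpha)$. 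Combining the two blocks with the untouched initial term $(1-\delta)^T\cdot\Phi(\vecs^0,\vecp^0)$ yields the claimed bound.

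There is essentially no serious obstacle here; the only points requiring a little care are (i) keeping the ceiling in the exponent so that the index $t_0$ is an integer while still having $(1-\delta)^{T-t_0}\le T^{-(\alpha+\beta)}$ (handled by monotonicity, since rounding up only helps), and (ii) making sure the constant hidden in $\O(T^{-\beta})$ depends only on $\delta$, $\alpha$, $\beta$ and the constant in the hypothesis $\sum_t\Delta^t=\O(T^\alpha)$, which is immediate from the chain of inequalities above. If one wants $t_0\ge 0$ one may additionally assume $T$ is large enough that $\ceil{\frac{\alpha+\beta}{\delta}\log T}\le T$; for smaller $T$ the stated bound holds trivially since then the ``recent'' sum already covers all of $1,\ldots,T$.
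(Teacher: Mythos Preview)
Your proof is correct. The paper does not spell out a proof of Corollary~\ref{corr:meta}; it is stated as an immediate consequence of Theorem~\ref{thm:meta}, and your argument---splitting the sum in the first bound of Theorem~\ref{thm:meta} at $t_0 = T-\ceil{\frac{\alpha+\beta}{\delta}\log T}$, bounding the recent block by dropping the factors $(1-\delta)^{T-\tau}\le 1$, and bounding the old block via $(1-\delta)^{T-\tau}\le(1-\delta)^{T-t_0}\le e^{-\delta(T-t_0)}\le T^{-(\alpha+\beta)}$ together with $\sum_{\tau\le t_0}\Delta^\tau\le\sum_{\tau\le T}\Delta^\tau=\O(T^\alpha)$---is exactly the intended derivation. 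Your side remarks on the ceiling and on the range $T\ge\ceil{\frac{\alpha+\beta}{\delta}\log T}$ are also appropriate.
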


As $T\rightarrow\infty$, the last two terms of the above inequality diminish.
The bound is dominated by the first term, which describes the impact of the changes in the \emph{recent} $\O\left( \frac{\log T}{\delta} \right)$ steps.

% !TEX root = main.tex

\section{Proportional Response Dynamics}
\label{sect:damped-PR}

%Damped Proportional Response Dynamic (DPRD), proposed by Cheung, Cole and Tao~\cite{CCT2018}, is a generalization of Proportional Response Dynamic (PRD).
In the Fisher market setting, the general protocol of proportional response dynamics (PRD) is as follows. In each round, each buyer $i$ splits her budget $b_i$ among the $n$ goods according to some rule, and send the bids to the sellers of the corresponding goods. Based on the bids gathered from all buyers, the seller of each good $j$ send back (simple) signals to buyers, which are then used by buyers for updating their bids in the next round. We summarize the notation and results we need from Cheung, Cole and Tao~\cite{CCT2018} below. When buyer $i$ splits her budget $b_i$ among the $n$ goods, let $b_{ij}$ denote the spending by her on good $j$. Let $\bbB$ denote $\{b_{ij}\}_{i\in [m],j\in [n]}$. Let $p_j := \sum_i b_{ij}$.

Consider the substitute domain, i.e., when the $\rho_i$ parameters of all buyers are strictly between $0$ and $1$. In each round, the seller of good $j$ distributes the good among buyers in proportion to the bids received, and then after receiving the goods, each buyer splits her budget in proportion to the utility generated from the quantity of each good received. More formally, let $p_j^t = \sum_i b_{ij}^t$, then the update rule is
\begin{equation}
\label{eqn:subst-CES-PR-rule}
b_{ij}^{t+1} ~=~ b_i \cdot a_{ij} \left(\frac {b_{ij}^{t}}{p_j^t}\right)^{\rho_i} ~\left/ \left(\sum_k a_{ik} \left(\frac {b_{ik}^{t}}{p_k^t}\right)^{\rho_i}\right)\right.
\end{equation}
%
%\PN{I think we should not discuss about complementary domain at all.} In the complementary domain, i.e., when the $\rho_i$ parameters of all buyers are strictly between $-\infty$ and $0$,
%in each round, the seller of good $j$ broadcasts the sum of all bids for good $j$, i.e., $p_j^t$,
%then each buyer split her budget according to the following rule:
%$$
%b_{ij}^{t+1} ~=~ b_i \cdot  \Big(\frac{a_{ij}} { {(p_j^t)}^{\rho_i}}\Big)^\frac{1}{1 - \rho_i} \left/ \left( \sum_k   \Big(\frac{a_{ik}}{ {(p_k^t)}^{\rho_i}}\Big)^\frac{1}{1 - \rho_i} \right) \right.
%$$

The \emph{Kullback-Leibler (KL) divergence} is similar to a distance measure. For vectors $\bbx$ and $\bby$ such that $\sum_j x_j ~=~ \sum_j y_j$, the explicit formula is $\KL(\bbx || \bby) ~:=~ \sum_j x_j \cdot \ln \frac{x_j}{y_j}$. The above update rule is equivalent to mirror descent w.r.t.~the KL divergence (but with different step sizes for different buyers) of the same function:
\begin{equation}\label{eq:PF-PRD}
g(\bbB) ~=~ - \sum_{ij} \frac{b_{ij}}{\rho_i} \log \frac{a_{ij} (b_{ij})^{\rho_i - 1} } { \left( p_j \right)^{\rho_i}  }\enspace,
\end{equation}
defined on the domain
$
C ~=~ \left\{ \bbB ~\Bigg|~ \forall i,~\sum_j b_{ij} ~=~ b_i~~\text{and}~~\forall i,j,~b_{ij}\ge 0 \right\}.
$
For our purpose, it suffices to know that any equilibrium $\bbB^*\in C$ of PRD corresponds to a minimum point of $g$.
The market potential with proportional response dynamics will be defined as:
\begin{equation}
\label{eqn:PR-dyn}
G(\bbB) ~=~ g(\bbB) - g(\bbB^*)
\end{equation}
Cheung, Cole and Tao~\cite{CCT2018} show that for positive constants $q_1 < q_2$ (which depend on the maximum and minimum values of $\rho_i$) the market potential in a static market is bounded by 
\[
G(\bbB^{t+1}) ~~\le~~ q_1 \cdot \KL(\bbB^*,\bbB^t) ~-~ q_2 \cdot \KL(\bbB^*,\bbB^{t+1}).
\]
%
%\pn{Our results are on the substitute domain. We will discuss the difficulty to generalize our technique to the complementary domain.}

In the rest of the section, we analyze the impact of changing utility functions and supplies on the convergence properties of proportional response dynamics. For the varying budgets case, the domain $C$ varies too, prohibiting a similar analysis.

\parabold{Dynamic Buyer Utilities.} Starting with the initial utility parameters, suppose that each $a_{ik}$ changes by a factor within $[e^{-\varepsilon},e^\varepsilon]$. For a given budget allocation $\bbB$, let $G(\calM^t,\bbB)$ denote the market potential for the utility of the buyers in round $t$, and $\bbB^{t,*} \in C$ the allocation that minimizes $G(\calM^t,\bbB)$.

\begin{proposition}
\label{prop:PRD}
After $T$ rounds, it holds that
\[
G(\calM^T,B^T) ~\leq~ q_1 \left( \frac{q_1}{q_2} \right)^{T-1} \cdot \KLe(\bbB^{0,*},\bbB^0) ~+~ \frac{q_2}{q_2 - q_1} \cdot \Delta\enspace,\]
where 
\[
\Delta ~=~ \sum_i \left(  \frac{b_i  (e^{\kappa_i} - 1)}{1 - \rho_i} \cdot \left| \rho_i \log \left( \frac{B}{ b_i } \right) - \log \left( \min\limits_{t,j}  \frac{a_{ij}^t}{\sum_k a_{ik}^t} \right) \right| + \frac{2 b_i \varepsilon}{\rho_i} \right)\enspace,
\]

and $\kappa_i = 2\varepsilon \left( 1 - c_i\left( 3 - 2 \min_i c_i \right) \right)$, where $c_i = \rho_i / (\rho_i - 1)$.
\end{proposition}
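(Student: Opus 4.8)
The plan is to combine the static progress guarantee for proportional response dynamics of Cheung, Cole and Tao~\cite{CCT2018} with quantitative bounds on how far the potential $G(\calM^t,\cdot)$ and its minimizer $\bbB^{t,*}$ can move when the utility weights are rescaled by factors in $[e^{-\varepsilon},e^{\varepsilon}]$ in a single round. Conceptually this is the Bregman-divergence version of the recovery scheme of Appendix~\ref{app:Bregman}: the static bound $G(\bbB^{t+1})\le q_1\KLe(\bbB^*,\bbB^t)-q_2\KLe(\bbB^*,\bbB^{t+1})$ gives, via $G\ge 0$, a one-step contraction $\KLe(\bbB^*,\bbB^{t+1})\le(q_1/q_2)\KLe(\bbB^*,\bbB^t)$; the perturbation bounds contribute an additive error $\Delta$ per round; iterating and summing the geometric series of ratio $q_1/q_2$ produces the $(q_1/q_2)^{T-1}$ decay on the initial term and the factor $\tfrac{q_2}{q_2-q_1}=\sum_{k\ge 0}(q_1/q_2)^k$ in front of $\Delta$.

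The first, easy ingredient is the \emph{direct} change of the potential. Since $g$ depends on the utility weights only through the linear term $-\tfrac{b_{ij}}{\rho_i}\log a_{ij}$, for every $\bbB\in C$ we have $g^t(\bbB)-g^{t-1}(\bbB)=-\sum_{ij}\tfrac{b_{ij}}{\rho_i}\log\tfrac{a_{ij}^t}{a_{ij}^{t-1}}$, hence $|g^t(\bbB)-g^{t-1}(\bbB)|\le\sum_i\tfrac{b_i\varepsilon}{\rho_i}$ by $\sum_j b_{ij}=b_i$ and $|\log(a_{ij}^t/a_{ij}^{t-1})|\le\varepsilon$. Applying this to $g^t(\bbB^{t,*})$ versus $g^{t-1}(\bbB^{t-1,*})$ and using optimality of both minimizers on either side yields $|G(\calM^t,\bbB)-G(\calM^{t-1},\bbB)|\le 2\sum_i\tfrac{b_i\varepsilon}{\rho_i}$, which is the $\sum_i\tfrac{2b_i\varepsilon}{\rho_i}$ summand of $\Delta$.

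The second and main ingredient controls the movement of $\bbB^{t,*}$, which enters the recursion because the static bound is stated relative to $\KLe(\bbB^{t,*},\cdot)$. I would first establish a sensitivity lemma for CES market equilibria: if all utility weights change by factors in $[e^{-\varepsilon},e^{\varepsilon}]$, the equilibrium prices change by a factor in $[e^{-\xi},e^{\xi}]$ with $\xi=2\varepsilon(1-\min_i c_i)$. Since CES equilibria have no closed form, this must be proved from the gross-substitutes structure, with the buyer of most negative $c_i$ (most substitutable preferences) governing how far spending, and hence prices, can shift. Feeding the price and weight bounds into the closed-form equilibrium spending $b_{ij}^{*}/b_i=(a_{ij})^{1-c_i}(p_j^{*})^{c_i}/\sum_k(a_{ik})^{1-c_i}(p_k^{*})^{c_i}$ and tracking exponents (numerator and denominator each move by a factor $e^{\pm((1-c_i)\varepsilon+|c_i|\xi)}$) gives $b_{ij}^{t,*}/b_{ij}^{t-1,*}\in[e^{-\kappa_i},e^{\kappa_i}]$ with exactly $\kappa_i=2\varepsilon(1-c_i(3-2\min_i c_i))$. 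Writing $b_{ij}^{t,*}=b_{ij}^{t-1,*}e^{\theta_{ij}}$ with $|\theta_{ij}|\le\kappa_i$, a direct computation shows that $\KLe(\bbB^{t,*},\bbB)-\KLe(\bbB^{t-1,*},\bbB)\le\sum_i(e^{\kappa_i}-1)\sum_j b_{ij}^{t-1,*}\,|\log(b_{ij}^{t-1,*}/b_{ij})|$ up to lower-order terms, for every $\bbB\in C$; bounding the log-coordinates of the equilibrium bid vectors uniformly (using $p_j^*\le B$, $\sum_j p_j^*=B$, the identity $\tfrac1{1-\rho_i}=1-c_i$, and a floor on the bid fraction expressed through $\min_{t,j}\tfrac{a_{ij}^t}{\sum_k a_{ik}^t}$) then produces the first summand $\sum_i\tfrac{b_i(e^{\kappa_i}-1)}{1-\rho_i}\,|\rho_i\log(B/b_i)-\log\min_{t,j}\tfrac{a_{ij}^t}{\sum_k a_{ik}^t}|$ of $\Delta$.

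Finally I would assemble the recurrence: the static bound in round $t$ together with $G\ge 0$ gives $\KLe(\bbB^{t-1,*},\bbB^t)\le(q_1/q_2)\KLe(\bbB^{t-1,*},\bbB^{t-1})$, and combining with the minimizer-movement bound yields $\KLe(\bbB^{t,*},\bbB^t)\le(q_1/q_2)\KLe(\bbB^{t-1,*},\bbB^{t-1})+\Delta_2^t$; iterating, summing the geometric series, and invoking the static bound plus the direct potential-perturbation estimate once more for the last round gives the stated inequality with $\Delta=\Delta_1+\Delta_2$. The hard part is the sensitivity lemma for equilibrium prices and the ensuing uniform control of the log-coordinates of $\bbB^{t,*}$ --- the only place where the gross-substitutes property and the dependence on $\min_i c_i$ are genuinely needed; everything else is routine bookkeeping with the geometric series.
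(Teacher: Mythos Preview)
Your proposal follows essentially the same route as the paper: the static CCT bound, the direct potential shift $2\sum_i b_i\varepsilon/\rho_i$, the equilibrium-price sensitivity $\xi=2\varepsilon(1-\min_i c_i)$ leading to $b_{ij}^{t,*}/b_{ij}^{t-1,*}\in[e^{-\kappa_i},e^{\kappa_i}]$, the KL-difference bound, and the geometric-series recursion via $q_2\,\KL(\bbB^{t+1,*},\bbB^{t+1})\le q_1\,\KL(\bbB^{t,*},\bbB^t)+\Delta$.

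The one point you gloss over: in the KL-difference term the relevant log-ratio is $\bigl|\log(b_{ij}^{*,t'}/b_{ij}^{t+1})\bigr|$ with the \emph{dynamic} bid $b_{ij}^{t+1}$ in the denominator, so it is not enough to bound ``the log-coordinates of the equilibrium bid vectors''. The paper obtains a uniform lower bound $q_i$ on the dynamic bids $b_{ij}^t$ directly from the PRD update rule~\eqref{eqn:subst-CES-PR-rule}, by solving for the fixed point of the one-step minorant $b_{ij}^{t+1}\ge b_i\cdot a_{i,\min}(q/B)^{\rho_i}\big/\sum_k a_{ik}$, which needs the standing assumption $b_{ij}^0\ge q_i$; the same $q_i$ then lower-bounds $b_{ij}^{*,t'}$ by a limit argument. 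Plugging $\log(b_i/q_i)$ into the KL-difference estimate is what produces the first summand of $\Delta$.
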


To prove this, we show in the following Claim~\ref{claim:PRD} that the potential~(\ref{eqn:PR-dyn}) satisfies a Market-Perturbation property with the value of $\Delta$ specified in the proposition.
\begin{claim}
For any round $t \le T$ it holds
\label{claim:PRD}
\begin{align*}
	G(\calM^{t+1},B^{t+1}) ~\leq~& q_1 \cdot \KLe(\bbB^{t,*},\bbB^t) ~-~ q_2 \cdot \KLe(\bbB^{t+1,*},\bbB^{t+1}) ~+~ 2 \sum_i \frac{b_i \varepsilon}{\rho_i}\\
	& ~+~ \sum_i b_i (e^{\kappa_i} - 1) \cdot \left| \log C_i - \log \Pi_i \right| ,
\end{align*} where $C_i =   \left( \frac{B}{ b_i} \right)^{\frac{\rho_i}{1 - \rho_i}} $ is a constant and $\Pi_i = \left( \min\limits_{t,j}  \frac{a_{ij}^t}{\sum_k a_{ik}^t} \right)^{\frac{1}{1 - \rho_i}} $.

\end{claim}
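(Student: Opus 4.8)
The plan is to apply the static mirror-descent guarantee stated above to the market $\calM^t$ whose utility parameters $a_{ij}^t$ govern the update~(\ref{eqn:subst-CES-PR-rule}) from $\bbB^t$ to $\bbB^{t+1}$, and then account for the three ways in which moving to the round-$(t+1)$ market perturbs the terms of that guarantee. Writing $g_{\calM}$ for the potential~(\ref{eq:PF-PRD}) of a market $\calM$ and $\bbB^{t,*}$ for its minimizer, the static bound applied to $\calM^t$ reads
$$
g_{\calM^t}(\bbB^{t+1}) - g_{\calM^t}(\bbB^{t,*}) ~\le~ q_1 \cdot \KLe(\bbB^{t,*},\bbB^t) ~-~ q_2 \cdot \KLe(\bbB^{t,*},\bbB^{t+1}).
$$
Since $G(\calM^{t+1},\bbB^{t+1}) = g_{\calM^{t+1}}(\bbB^{t+1}) - g_{\calM^{t+1}}(\bbB^{t+1,*})$, the claim follows once we bound (a)~$g_{\calM^{t+1}}(\bbB^{t+1}) - g_{\calM^t}(\bbB^{t+1})$, (b)~$g_{\calM^t}(\bbB^{t,*}) - g_{\calM^{t+1}}(\bbB^{t+1,*})$, and (c)~$\KLe(\bbB^{t+1,*},\bbB^{t+1}) - \KLe(\bbB^{t,*},\bbB^{t+1})$; the term $q_1\KLe(\bbB^{t,*},\bbB^t)$ already appears verbatim in the claim and needs no adjustment.

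For (a) and (b) I would use that $g_{\calM}$ depends on the utility parameters only through the additive term $-\sum_{ij}\frac{b_{ij}}{\rho_i}\log a_{ij}$, so that for every $\bbB$ in the domain $C$ (which is fixed, since budgets do not change),
$$
\left| g_{\calM^{t+1}}(\bbB) - g_{\calM^t}(\bbB)\right| ~\le~ \sum_{ij}\frac{b_{ij}}{\rho_i}\left|\log\frac{a_{ij}^{t+1}}{a_{ij}^t}\right| ~\le~ \varepsilon \sum_i \frac{b_i}{\rho_i},
$$
using $\sum_j b_{ij} = b_i$ and $a_{ij}^{t+1}/a_{ij}^t \in [e^{-\varepsilon},e^{\varepsilon}]$. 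This bounds (a) directly, and, via the sandwich $g_{\calM^t}(\bbB^{t,*}) \le g_{\calM^t}(\bbB^{t+1,*}) \le g_{\calM^{t+1}}(\bbB^{t+1,*}) + \varepsilon\sum_i b_i/\rho_i$ together with its mirror image, it bounds (b) in magnitude by $\varepsilon\sum_i b_i/\rho_i$ as well; together these account for the $2\sum_i b_i\varepsilon/\rho_i$ term of the claim.

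The technical heart is (c), for which I would first establish two properties of static substitute CES markets with utility parameters in the given range. First, a \emph{uniform bid bound}: plugging $p_j = \sum_i b_{ij} \le B$ and $b_{ik} \le p_k$ into~(\ref{eqn:subst-CES-PR-rule}), a short induction shows that the set $\{\bbB \in C : b_{ij} \ge (\Pi_i/C_i)\,b_i \text{ for all } i,j\}$ is forward-invariant under the update in every round (the exponent arithmetic closes precisely because $\Pi_i^{1-\rho_i} = \min_{t,j} a_{ij}^t/\sum_k a_{ik}^t$), it contains the uniform initialization, and every PRD fixed point satisfies the same bound; combined with the trivial $b_{ij}\le b_i$, all of $b_{ij}^{t,*},\,b_{ij}^{t+1,*},\,b_{ij}^{t+1}$ then lie in a common interval of multiplicative width $C_i/\Pi_i$, so that any ratio among them has log-magnitude at most $\log C_i - \log \Pi_i$. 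Second, an \emph{equilibrium stability estimate}: starting from the closed form $b_{ij}^*/b_i = a_{ij}^{1/(1-\rho_i)}(p_j^*)^{c_i}\big/\sum_k a_{ik}^{1/(1-\rho_i)}(p_k^*)^{c_i}$ and propagating a per-round multiplicative $e^{\pm\varepsilon}$ perturbation of the $a_{ij}$ through the equilibrium price system, one obtains $b_{ij}^{t+1,*}/b_{ij}^{t,*} \in [e^{-\kappa_i},e^{\kappa_i}]$ with $\kappa_i = 2\varepsilon(1 - c_i(3 - 2\min_i c_i))$, the factor $3 - 2\min_i c_i$ stemming from a Lipschitz-type bound on how far equilibrium prices can move.

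Given these two facts, write $b_{ij}^{t+1,*} = e^{\theta_{ij}} b_{ij}^{t,*}$ with $|\theta_{ij}| \le \kappa_i$ and, for each buyer $i$, apply the mean-value theorem good by good to $h(x) = x\log(x/b_{ij}^{t+1})$:
$$
\sum_j\left[ h(b_{ij}^{t+1,*}) - h(b_{ij}^{t,*})\right] ~=~ \sum_j\left(\log\frac{\zeta_{ij}}{b_{ij}^{t+1}} + 1\right)\left(b_{ij}^{t+1,*} - b_{ij}^{t,*}\right) ~=~ \sum_j \log\frac{\zeta_{ij}}{b_{ij}^{t+1}}\left(b_{ij}^{t+1,*} - b_{ij}^{t,*}\right),
$$
where the ``$+1$'' contribution vanishes because $\sum_j (b_{ij}^{t+1,*} - b_{ij}^{t,*}) = 0$. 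Bounding $|\log(\zeta_{ij}/b_{ij}^{t+1})| \le \log C_i - \log\Pi_i$ by the uniform bid bound and $\sum_j |b_{ij}^{t+1,*} - b_{ij}^{t,*}| \le (e^{\kappa_i}-1)\,b_i$, and then summing over $i$, bounds (c) in magnitude by $\sum_i b_i(e^{\kappa_i}-1)\,|\log C_i - \log\Pi_i|$ (the absolute value is free, since $\log C_i \ge 0 > \log\Pi_i$). Substituting (a), (b), (c) into the static bound and collecting constants yields the claim. The main obstacle is the equilibrium stability estimate: extracting the precise $\kappa_i$ requires a quantitative sensitivity analysis of the equilibrium price map under multiplicative perturbations of the utility parameters; everything else reduces to bookkeeping of elementary estimates.
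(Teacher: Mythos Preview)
Your proposal is correct and follows essentially the same route as the paper: the same three-term decomposition into (a), (b), (c); the same $\varepsilon\sum_i b_i/\rho_i$ bound on each of (a) and (b) via the additive dependence of $g$ on $\log a_{ij}$ and the minimizer sandwich; and for (c) the same combination of an equilibrium-price sensitivity estimate (yielding $b_{ij}^{t+1,*}/b_{ij}^{t,*}\in[e^{-\kappa_i},e^{\kappa_i}]$), a forward-invariant lower bound on bids from the PRD update (yielding the $C_i/\Pi_i$ width), and the mean-value argument with the ``$+1$'' cancelling by $\sum_j(b_{ij}^{t+1,*}-b_{ij}^{t,*})=0$. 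The only part you flag as an obstacle---the precise $\kappa_i$---is exactly what the paper works out explicitly by bounding $\max_k p_k^{*,t+1}/p_k^{*,t}$ via the market-clearing identity, so your instincts about where the real content lies are accurate.
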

\begin{proof}[Proof Sketch.]
Consider the function $g$ defined above, and let $g_t(\bbB) = g(\calM^t,\bbB)$ be function $g$ in round $t$ with utility coefficients $\{a_{ij}^t\}$. We have
$$g_t(\bbB^{t+1}) - g_t(\bbB^{t,*}) ~~\le~~ q_1 \cdot \KL(\bbB^{t,*},\bbB^t) ~-~ q_2 \cdot \KL(\bbB^{t,*},\bbB^{t+1}).$$
To be able to prove this claim, we need to derive an inequality of the following form:
$$g_{t+1}(\bbB^{t+1}) - g_{t+1}(\bbB^{t+1,*}) ~~\le~~ q_1 \cdot \KL(\bbB^{t,*},\bbB^t) ~-~ q_2 \cdot \KL(\bbB^{t+1,*},\bbB^{t+1}) ~+~ \Delta_t.$$
This inequality is implied by the first one whenever $\Delta_t$ is chosen large enough to satisfy
\begin{align*}
\Delta_t ~\ge&~\left[ g_{t+1}(\bbB^{t+1}) - g_t(\bbB^{t+1}) \right] ~+~ \left[ g_t(\bbB^{t,*}) - g_{t+1}(\bbB^{t+1,*}) \right]\\
& \hspace{0.5cm} ~+~ q_2 \cdot \left[ \KL(\bbB^{t+1,*},\bbB^{t+1}) - \KL(\bbB^{t,*},\bbB^{t+1}) \right].
\end{align*}
We choose a value for $\Delta_t$ that satisfies the even larger lower bound of
\begin{align}
&\left[ g_{t+1}(\bbB^{t+1}) - g_t(\bbB^{t+1}) \right] ~+~ \left[ g_t(\bbB^{t+1,*}) - g_{t+1}(\bbB^{t+1,*}) \right] \nonumber\\
& \hspace{0.5cm} \label{eq:Deltat_UB} ~+~ q_2 \cdot \left[ \KL(\bbB^{t+1,*},\bbB^{t+1}) - \KL(\bbB^{t,*},\bbB^{t+1}) \right]\\
\le~& \nonumber 2\left(\sum_i \frac{b_i \varepsilon}{\rho_i} \right) ~+~ q_2 \cdot \left[ \KL(\bbB^{t+1,*},\bbB^{t+1}) - \KL(\bbB^{t,*},\bbB^{t+1}) \right] \\
\le~& \Delta_t \nonumber
\end{align}
The first inequality follows since for any $\bbB$, we have $g_{t+1}(\bbB) - g_t(\bbB) ~=~ -\sum_{ij} \frac{b_{ij}}{\rho_i} \log \frac{a_{ij}^{t+1}}{a_{ij}^{t}}$ and $\max_{j} |\log (a_{ij}^{t+1} / a_{ij}^{t})| = \varepsilon$. Finally, we derive an upper bound on the third term in Appendix~\ref{app:PRD}. This yields the final value of $\Delta_t$ and proves the claim.
\end{proof}

\begin{proof}[Proof of Proposition~\ref{prop:PRD}]
Now suppose $\Delta$ is given as in the proposition, then with the above claim it follows that:
\begin{equation}
\label{eqn:KL-bnd}
	q_2 \cdot \KL(\bbB^{t+1,*},\bbB^{t+1}) ~\leq~ q_1 \cdot \KL(\bbB^{t,*},\bbB^t) ~+~ \Delta.
\end{equation}
The potential of the market at round $T$ can be bounded by
\begin{align*}
G_{T}(B^{T}) & ~\leq~ q_1 \cdot \KL(\bbB^{T-1,*},\bbB^{T-1}) ~+~ \Delta\\[5pt]
	& ~\leq~ q_1 \left( \frac{q_1}{q_2} \cdot \KL(\bbB^{T-2,*},\bbB^{T-2}) + \frac{\Delta}{q_2} \right) ~+~ \Delta\\
	& ~\leq~ q_1 \cdot \left( \frac{q_1}{q_2} \right)^{T-1} \cdot \KL(\bbB^{0,*},\bbB^{0}) ~+~ \frac{q_2}{q_2 - q_1} \cdot \Delta,
\end{align*}
where the inequalities follow by recursive application of~\eqref{eqn:KL-bnd}.
\end{proof}

\parabold{Dynamic Supplies.}
It turns out that the case with varying supplies can be reduced to the case with varying utility functions. To see this, note that the function $g$ defined in \eqref{eq:PF-PRD} assumes that the supply of each good is normalized to be one unit. When the supply of good $j$ is changed from $1$ to $e^\epsilon$, by performing a re-normalization of the supply, it is equivalent to changing $a_{ij}$ to $a_{ij} \cdot e^{\epsilon \rho_i}$.

%obtaining a linear convergence rate for the resulting dynamic, assuming $0 < \rhoi < 1$.
%
%
%
%
%Instead, we observe that in the substitutes domain this rule is the mirror descent updating
%rule using a KL divergence for the following optimization problem.
%
%\begin{align*}
%& \min_{\bbb} \Phi(\bbb) = - \sum_{ij} \frac{\bij}{\rhoi} 
%\log \frac{\aij (\bij)^{\rhoi - 1} } 
%{\sum_{h}  b_{hj} } \\
%&
%\text{subject to}~~~~\textstyle{\sum_j} \bij \le \ei~~~~\text{for all } i,
%~~~~\text{and}~~\bij \geq 0~~~~\text{for all } i, j.
%\end{align*}
%%
%In the complementary domain, the mirror descent updating rule for this function is:

%\subparagraph*{Acknowledgements.}
%
%I want to thank \dots

%%
%% Bibliography
%%

%% Either use bibtex (recommended), 

\bibliographystyle{abbrvurl}% the recommended bibstyle
\bibliography{bib}

\begin{thebibliography}{10}

\bibitem{ABH1959}
K.~Arrow, H.~Block, and L.~Hurwicz.
\newblock On the stability of the competitive equilibrium: {II}.
\newblock {\em Econometrica}, 27(1):82--109, 1959.

\bibitem{ArrowDebreu1954}
K.~Arrow and G.~Debreu.
\newblock Existence of an equilibrium for a competitive economy.
\newblock {\em Econometrica}, 22(3):265--290, 1954.

\bibitem{BeiGH16}
X.~Bei, J.~Garg, and M.~Hoefer.
\newblock Ascending-price algorithms for unknown markets.
\newblock In {\em Proc.\ 17th {ACM} Conf.\ Economics and Computation (EC)},
  page 699, 2016.

\bibitem{BeiGHM16}
X.~Bei, J.~Garg, M.~Hoefer, and K.~Mehlhorn.
\newblock Computing equilibria in markets with budget-additive utilities.
\newblock In {\em Proc.\ 24th European Symposium on Algorithms (ESA)}, pages
  8:1--8:14, 2016.

\bibitem{BDX2011}
B.~Birnbaum, N.~Devanur, and L.~Xiao.
\newblock Distributed algorithms via gradient descent for fisher markets.
\newblock In {\em Proc.\ 12th {ACM} Conf.\ Electronic Commerce (EC)}, pages
  127--136, 2011.

\bibitem{Bubeck}
S.~Bubeck.
\newblock Convex optimization: Algorithms and complexity.
\newblock {\em Foundations and Trends in Machine Learning}, 8(3-4):231--357,
  2015.

\bibitem{Cheung2014}
Y.~K. Cheung.
\newblock {\em Analyzing Tatonnement Dynamics in Economic Markets}.
\newblock PhD thesis, Courant Institute of Mathematical Sciences, NYU, 2014.

\bibitem{CheungC2014}
Y.~K. Cheung and R.~Cole.
\newblock Amortized analysis on asynchronous gradient descent.
\newblock {\em CoRR}, abs/1412.0159, 2014.

\bibitem{CheungC2016}
Y.~K. Cheung and R.~Cole.
\newblock A unified approach to analyzing asynchronous coordinate descent and
  tatonnement.
\newblock {\em CoRR}, abs/1612.09171, 2016.

\bibitem{CCD2013}
Y.~K. Cheung, R.~Cole, and N.~Devanur.
\newblock Tatonnement beyond gross substitutes? {G}radient descent to the
  rescue.
\newblock In {\em Proc.\ 45th Symp.\ Theory of Computing {STOC}}, pages
  191--200, 2013.

\bibitem{CCR2012}
Y.~K. Cheung, R.~Cole, and A.~Rastogi.
\newblock Tatonnement in ongoing markets of complementary goods.
\newblock In {\em Proc.\ 13th {ACM} Conf.\ Electronic Commerce (EC)}, pages
  337--354, 2012.

\bibitem{CCT2018}
Y.~K. Cheung, R.~Cole, and Y.~Tao.
\newblock Dynamics of distributed updating in {F}isher markets.
\newblock In {\em Proc.\ 19th {ACM} Conf.\ Economics and Computation (EC)},
  2018.
\newblock To appear.

\bibitem{CodenottiMV2005}
B.~Codenotti, B.~Mc{C}une, and K.~R. Varadarajan.
\newblock Market equilibrium via the excess demand function.
\newblock In {\em Proc.\ 37th Symp.\ Theory of Computing (STOC)}, pages 74--83,
  2005.

\bibitem{CF2008}
R.~Cole and L.~Fleischer.
\newblock Fast-converging tatonnement algorithms for one-time and ongoing
  market problems.
\newblock In {\em Proc.\ 40th Symp. Theory of Computing (STOC)}, pages
  315--324, 2008.

\bibitem{CFR2010}
R.~Cole, L.~Fleischer, and A.~Rastogi.
\newblock Discrete price updates yield fast convergence in ongoing markets with
  finite warehouses.
\newblock {\em CoRR}, abs/1012.2124, 2010.
\newblock URL: \url{http://arxiv.org/abs/1012.2124}.

\bibitem{DPSV2008}
N.~R. Devanur, C.~H. Papadimitriou, A.~Saberi, and V.~V. Vazirani.
\newblock Market equilibrium via a primal--dual algorithm for a convex program.
\newblock {\em J. ACM}, 55(5), 2008.

\bibitem{DuanGM16}
R.~Duan, J.~Garg, and K.~Mehlhorn.
\newblock An improved combinatorial polynomial algorithm for the linear
  arrow-debreu market.
\newblock In {\em Proc.\ 27th Symp.\ Discrete Algorithms (SODA)}, pages
  90--106, 2016.

\bibitem{DuanM2015}
R.~Duan and K.~Mehlhorn.
\newblock A combinatorial polynomial algorithm for the linear arrow-debreu
  market.
\newblock {\em Inf. Comput.}, 243:112--132, 2015.

\bibitem{FosterLLST16}
D.~Foster, Z.~Li, T.~Lykouris, K.~Sridharan, and {\'{E}}.~Tardos.
\newblock Learning in games: Robustness of fast convergence.
\newblock In {\em Proc.\ Conf.\ Neural Information Processing Systems (NIPS)},
  pages 4727--4735, 2016.

\bibitem{Jain2007}
K.~Jain.
\newblock A polynomial time algorithm for computing an {Arrow-Debreu} market
  equilibrium for linear utilities.
\newblock {\em SIAM J. Comput.}, 37(1):303--318, 2007.

\bibitem{Kelly1997}
F.~Kelly.
\newblock Charging and rate control for elastic traffic.
\newblock {\em European Transactions on Telecommunications}, 8(1):33--37, 1997.

\bibitem{KMT1998}
F.~Kelly, A.~Maulloo, and D.~Tan.
\newblock Rate control in communication networks.
\newblock {\em Journal of Operational Research Society}, 49:237--252, 1998.

\bibitem{KV2002}
F.~Kelly and V.~Vazirani.
\newblock Rate control as a market equilibrium.
\newblock Unpublished manuscript., 2002.

\bibitem{LLSB08}
D.~Levin, K.~LaCurts, N.~Spring, and B.~Bhattacharjee.
\newblock Bittorrent is an auction: Analyzing and improving bittorrent's
  incentives.
\newblock In {\em Proc.\ ACM Conf.\ Data Communication (SIGCOMM)}, pages
  243--254, 2008.

\bibitem{LykourisST16}
T.~Lykouris, V.~Syrgkanis, and {\'{E}}.~Tardos.
\newblock Learning and efficiency in games with dynamic population.
\newblock In {\em Proc.\ 27th Symp.\ Discrete Algorithms (SODA)}, pages
  120--129, 2016.

\bibitem{Orlin2010}
J.~Orlin.
\newblock Improved algorithms for computing fisher's market clearing prices:
  computing fisher's market clearing prices.
\newblock In {\em Proc.\ 42nd Symp.\ Theory of Computing (STOC)}, pages
  291--300, 2010.

\bibitem{sub94}
R.~Subramanian and I.~Scherson.
\newblock An analysis of diffusive load-balancing.
\newblock In {\em Proc.\ 6th Symp.\ Parallelism in Algorithms and Architectures
  (SPAA)}, pages 220--225, 1994.

\bibitem{walras}
L.~Walras.
\newblock {\em \'{E}l\'{e}ments d'\'{e}conomie politique pure ou th\'{e}orie de
  la richesse sociale (Elements of Pure Economics, or the theory of social
  wealth)}.
\newblock Lausanne, Paris, 1874.
\newblock (1899, 4th ed.; 1926, rev ed., 1954, Engl. transl.).

\bibitem{WuZ2007}
F.~Wu and L.~Zhang.
\newblock Proportional response dynamics leads to market equilibrium.
\newblock In {\em Proc.\ 39th Symp.\ Theory of Computing (STOC)}, pages
  354--363, 2007.

\bibitem{Ye2008}
Y.~Ye.
\newblock A path to the arrow-debreu competitive market equilibrium.
\newblock {\em Math.\ Program.}, 111(1-2):315--348, 2008.

\bibitem{Zhang2011}
L.~Zhang.
\newblock Proportional response dynamics in the {F}isher market.
\newblock {\em Theor. Comput. Sci.}, 412(24):2691--2698, 2011.

\end{thebibliography}

%% .. or use the thebibliography environment explicitely
\clearpage

\appendix

\section{Missing Proofs}
\label{app:Proofs}

\subsection{Proof of Proposition~\ref{prop:misspendingDynamicBudgets}}
\label{app:misspendingDynamicBudgets}

%\begin{proof}
To show the result, we establish the Market-Perturbation property. Note that the misspending potential can be given by
\begin{align*}
\Phims(\calM^t,\vecp^t) ~=~ \sum_{j=1}^n p_j^t \cdot \left| x_j^t - w_j \right|%\\
                         %&=~ \sum_{j=1}^n p_j^t \cdot \left| \sum_{i=1}^m x_{ij}^t - w_j \right| \\
                         &=~ \sum_{j=1}^n p_j^t \cdot \left| \sum_{i=1}^m \left(b_i^1+\sum_{\tau=1}^t \ep_i^\tau\right) \cdot \frac{(a_{ij})^{1-c} (p_j)^{c-1}}{\sum_{k=1}^n (a_{ik})^{1-c} (p_k)^c} - w_j \right|\enspace.
\end{align*}
Hence, by the triangle inequality,
\begin{align*}
  \Phims(\calM^t,\vecp) ~&\le~ \sum_{j=1}^n p_j \cdot \left| \sum_{i=1}^m \left(b_i^1+\sum_{\tau=1}^{t-1} \ep_i^\tau\right) \cdot \frac{(a_{ij})^{1-c} (p_j)^{c-1}}{\sum_{k=1}^n (a_{ik})^{1-c} (p_k)^c} - w_j \right|\\
  & ~~~~~~~+~~ \sum_{j=1}^n p_j \cdot \left| \sum_{i=1}^m \ep_i^t \cdot \frac{(a_{ij})^{1-c} (p_j)^{c-1}}{\sum_{k=1}^n (a_{ik})^{1-c} (p_k)^c} \right| \\
  &=~ \Phims(\calM^{t-1},\vecp) ~~+~~ \sum_{j=1}^n p_j \cdot \left| \sum_{i=1}^m \ep_i^t \cdot \frac{(a_{ij})^{1-c} (p_j)^{c-1}}{\sum_{k=1}^n (a_{ik})^{1-c} (p_k)^c} \right| \\
  &\le~ \Phims(\calM^{t-1},\vecp) ~~+~~ \sum_{i=1}^m |\ep_i^t| \cdot  \sum_{j=1}^n p_j \cdot \frac{(a_{ij})^{1-c} (p_j)^{c-1}}{\sum_{k=1}^n (a_{ik})^{1-c} (p_k)^c} \\
  &=~ \Phims(\calM^{t-1},\vecp) ~~+~~ \| \vept \|_1 
\end{align*}
Thus, using Proposition~\ref{prop:marketConverge} with $\Delta^t = \|\vept\|_1$ and $\Delta = \eta$, the proof follows.
%it holds for any $t \le T$
%$$
%\Phims(\calM^T,\vecp^T) ~\le~ \sum_{\tau=t+1}^{T} (1-\delta)^{T-\tau} \|\veptau\|_1 ~+~ \frac{(1-\delta)^{T-t}}{\delta}\cdot m\eta ~+~ (1-\delta)^T \cdot \Phims(\calM^0,\vecp^0)\enspace.\qedhere
%$$
\qed
%\end{proof}

\subsection{Proof of Lemma~\ref{lem:misspendingDynamicUtility}}
\label{app:misspendingDynamicUtility}

%\begin{proof}
We describe a simple algorithm to compute such a vector $\beta'$ and prove the lemma. Initially, set $\beta^0 = \beta$ and partition the set $G$ of goods into sets $S^0 = \left\{j \in G \mid \frac{\alpha_j\beta^0_j}{\sum_k \alpha_k \beta^0_k} \ge \alpha_j \right\}$ and $R^0 = G \setminus S$. This yields
\begin{align*}
 \sum_j \Delta s_j  
  ~&=~ \sum_j \mod{\frac{\alpha_j \beta^0_j}{\sum_{k} \alpha_k \beta^0_k } ~-~ \alpha_j }\\
   &=~ \sum_{j \in S^0} \frac{\alpha_j \beta^0_j}{\sum_k \alpha_k  \beta^0_k } - \alpha_j ~+~ \sum_{j \in R^0}  \alpha_j - \frac{\alpha_j \beta^0_j}{\sum_k \alpha_k \beta^0_k } \\
   &=~\frac{\sum_{j \in S^0} \alpha_j \beta^0_j - \sum_{j\in R^0} \alpha_j \beta^0_j}{\sum_k \alpha_k \beta^0_k} ~-~ \sum_{j \in S^0} \alpha_j + \sum_{j \in R^0} \alpha_j\enspace.
\end{align*}
For $j \in S^0$, the derivative of this expression for $\beta_j^0$ is $\frac{\alpha_j\left(2\sum_{k \in R^0}\alpha_k \beta^0_k \right)}{\left(\sum_k \alpha_k \beta^0_k\right)^2}$, whereas for $j \in R^0$ it is $- \frac{\alpha_j\left(2\sum_{k \in R^0}\alpha_k \beta^0_k \right)}{\left(\sum_k \alpha_k \beta^0_k\right)^2}$. Now, we call a good $j$ \emph{consistent} if $j \in S^0$ and $\beta_j^0 = \mu$, or $j \in R^0$ and $\beta_j^0 = 1/\mu$. If all goods are consistent, we are done. Otherwise, we can pick any inconsistent good $j$ and move $\beta_j^0$ to the exteme value (either to $\mu$ when $j \in S^0$ or $1/\mu$ when $j \in R^0$). We name the resulting vector $\beta^1$ and repeat the process: Construct $S^1$ and $R^1$, find any inconsistent good and move its value to the exteme value. We repeat the process until all goods are consistent. 

Suppose in some round $\ell-1$ we increase $j \in S^{\ell-1}$ to $\beta_j^\ell = \mu$. Then the expression $\frac{\alpha_j \beta_j^\ell}{\sum_k \alpha_k \beta^\ell_k}$ will increase. Hence, since $j \in S^{\ell-1}$ it must hold $j \in S^\ell$ and, thus,
\begin{equation}
  \label{eq:SstaysS}
    \frac{\mu}{\alpha_j \mu + \sum_{k} \alpha_k \beta_k^{\ell-1}} ~\ge~ 1 \enspace.
\end{equation}
Now consider any good $j'$ with $\beta_{j'}^{\ell-1} = \mu$. It must be $j' \in S^{\ell}$ since  using Eq.~\eqref{eq:SstaysS}
\[
    \frac{\alpha_{j'}\beta_{j'}^{\ell}}{\sum_{k} \alpha_k \beta_k^{\ell}} ~=~ 
    \frac{\alpha_{j'} \mu}{\alpha_j \mu + \sum_{k} \alpha_k \beta_k^{\ell-1}} ~\ge~ \alpha_{j'}\enspace.
\]
Now suppose in some round $\ell-1$ we decrease $j \in R^{\ell-1}$ to $\beta_j^\ell = 1/\mu$. Now consider any good $j' \in S^{\ell-1}$. Since the expression for $j'$ only increases, we have $j' \in S^{\ell}$. 

These observations show that if in some iteration $\ell'$ we have inconsistent $j' \in S^{\ell'}$ and raise the value to $\beta_{j'}^{\ell'+1} = \mu$, then $j' \in S^{\ell}$ and remains consistent for every $\ell \ge \ell'$. Using similar arguments, it follows that if in some iteration $\ell'$ we have inconsistent $j' \in R^{\ell'}$ and lower the value to $\beta_{j'}^{\ell'+1} = 1/\mu$, then $j' \in R^{\ell}$ and remains consistent for every $\ell \ge \ell'$. Thus, after at most $n$ iterations, the process is done and a consistent vector $\beta'$ has evolved. Furthermore, all adjustments are non-decreasing for the overall expression, and we obtain an upper bound on $\sum_j \Delta s_j$.
\qed

\subsection{Proof of Claim~\ref{claim:PRD}}
\label{app:PRD}

To finish the proof, it remains to bound third term in \eqref{eq:Deltat_UB}. The explicit expression for the third term is
\begin{equation}\label{eq:third-explicit}
\sum_{ij} \left( b_{ij}^{*,t+1} \cdot \log \frac{b_{ij}^{*,t+1}}{b_{ij}^{t+1}} - b_{ij}^{*,t} \cdot \log \frac{b_{ij}^{*,t}}{b_{ij}^{t+1}} \right).
\end{equation}
Recall that by setting $c_i = \rho_i / (\rho_i - 1)$, at $\bbp^{*,t}$, the total demand for each good $j$ is one, i.e.,
\begin{equation}\label{eq:demand-before}
\sum_i b_i \cdot \frac{(a^t_{ij})^{1-c_i} (p^{*,t}_j)^{c_i-1}}{\sum_{k=1}^n (a^t_{ik})^{1-c_i} (p^{*,t}_k)^{c_i}} ~=~ 1.
\end{equation}

Now, suppose that each $a_{ik}$ changes by a factor within $[e^{-\varepsilon},e^\varepsilon]$.
%After the change of values to $a_{ik}$'s,
Let $j = \argmax_k (p_k^{*,t+1} / p_k^{*,t})$, and let $\alpha = p_j^{*,t+1} / p_j^{*,t} \ge 1$.
In the substitute domain, the new total demand for good $j$ is
\begin{align*}
\sum_i b_i \cdot \frac{(a^{t+1}_{ij})^{1-c_i} (p^{*,t+1}_j)^{c_i-1}}{\sum_{k=1}^n (a^{t+1}_{ik})^{1-c_i} (p^{*,t+1}_k)^{c_i}}
&~\le~  \sum_i b_i \cdot \frac{\left[(a^{t}_{ij})^{1-c_i} \cdot e^{\varepsilon(1-c_i)}\right] \left[ (p^{*,t}_j)^{c_i-1} \cdot \alpha^{c_i-1}\right]}
{\sum_{k=1}^n \left[ (a^{t}_{ik})^{1-c_i} \cdot e^{-\varepsilon(1-c_i)} \right] \left[ (p^{*,t}_k)^{c_i} \cdot \alpha^{c_i} \right]}\\
&~=~ \sum_i b_i \cdot \frac{e^{2\varepsilon(1-c_i)}}{\alpha} \cdot \frac{\left[(a^{t}_{ij})^{1-c_i} \right] \left[ (p^{*,t}_j)^{c_i-1} \right]}
{\sum_{k=1}^n \left[ (a^{t}_{ik})^{1-c_i} \right] \left[ (p^{*,t}_k)^{c_i} \right]}\\
&~\le~ \frac{e^{2\varepsilon(1-\min_i c_i)}}{\alpha}.\comm{by \eqref{eq:demand-before}}
\end{align*}
But note that the new total demand is also $1$. Thus, $\alpha \le e^{2\varepsilon(1-\min_i c_i)}$,
and hence for every good $k$, $p_k^{*,t+1} / p_k^{*,t} \le e^{2\varepsilon(1-\min_i c_i)}$.

Symmetrically, we can also prove that for every good $k$, $p_k^{*,t+1} / p_k^{*,t} \ge e^{-2\varepsilon(1-\min_i c_i)}$.
To conclude, we show that
$$(p_k^{*,t+1} / p_k^{*,t}) \in [e^{-\delta},e^{\delta}],~\text{where}~\delta = 2\varepsilon(1-\min_i c_i).$$

Next, note that
$$
b_{ij}^{t,*} ~=~ b_i \cdot \frac{(a^t_{ij})^{1-c_i} (p^{*,t}_j)^{c_i}}{\sum_{k=1}^n (a^t_{ik})^{1-c_i} (p^{*,t}_k)^{c_i}}
$$
while
$$
b_{ij}^{t+1,*} ~=~ b_i \cdot \frac{(a^{t+1}_{ij})^{1-c_i} (p^{*,t+1}_j)^{c_i}}{\sum_{k=1}^n (a^{t+1}_{ik})^{1-c_i} (p^{*,t+1}_k)^{c_i}}.
$$
Since $(a^{t+1}_{ik} / a^{t}_{ik}) \in [e^{-\varepsilon},e^{\varepsilon}]$ and $(p_k^{*,t+1} / p_k^{*,t}) \in [e^{-\delta},e^{\delta}]$,
it is clear that
$$
\frac{b_{ij}^{t+1,*}}{b_{ij}^{t,*}} ~\in~ \left[e^{-2\left( \varepsilon(1-c_i) - c_i\delta \right)}~,~e^{2\left(\varepsilon(1-c_i) - c_i\delta\right)}\right]
~=:~ \left[e^{-\kappa_i}~,~e^{\kappa_i}\right]
$$

Now we come back to \eqref{eq:third-explicit}.
Note that $\frac{\partial}{\partial t} \left(t\log \frac{t}{q}\right) ~=~ \log \frac{t}{q} + 1$.
Since $\sum_{ij} b_{ij}^{*,t+1} = \sum_{ij} b_{ij}^{*,t}$, we can ignore the constant $+1$ in the above partial derivative,
and then bound $\eqref{eq:third-explicit}$ by
\begin{equation}
\label{eqn:temp-bnd}
\sum_{ij} \left|b_{ij}^{*,t+1} - b_{ij}^{*,t} \right| ~\cdot~
\left(~~\text{the maximum possible value of }\max_t\left\{\left| \log \frac{b_{ij}^{*,t'}}{b_{ij}^t} \right| \right\}~~\right).
\end{equation}
Without further assumption, the maximum possible value of $\max_{t,t'}\left\{\left| \log \frac{b_{ij}^{*,t'}}{b_{ij}^t} \right| \right\}$
can be arbitrarily big, since we might have a very tiny (yet positive) value of $b_{ij}^0$.
On the other hand, clearly, both $b_{ij}^t$ and $b_{ij}^{*,t'}$ is upper bounded by $b_i$.

So it suffices to derive a lower bound for $b_{ij}^t$ and $b_{ij}^{*,t'}$.
Since $b_{ij}^{*,t'}$ is an equilibrium value and it can be reached arbitrarily closely in the static setting,
a lower bound for $b_{ij}^t$ will carry through as a lower bound for $b_{ij}^{*,t'}$.

By \eqref{eqn:subst-CES-PR-rule}, if $b_{ij}^t \ge q$, then
$$b_{ij}^{t+1} ~\ge~ b_i \cdot \frac{a_{i,\min} \cdot (q/B)^{\rho_i}}{\sum_k a_{ik}}.$$
Thus, the value of $q_i$ which satisfies the equation $q_i ~=~ b_i\cdot \frac{a_{i,\min} \cdot (q_i/B)^{\rho_i}}{\sum_k a_{ik}}$
can serve as a lower bound of $b_{ij}^{t}$, provided that $b_{ij}^{0}\ge q_i$.
The equation solves to
$$
q_i ~=~ \left( b_i \cdot \frac{a_{i,\min}}{\sum_k a_{ik}} \cdot \frac{1}{B^{\rho_i}} \right)^{1/(1-\rho_i)}.
$$
Note that in the above bound, we deliberately ignore the fact that the values of $a_{ik}$ is changing, so as to avoid cluster of algebra.
We may simply replace $\frac{a_{i,\min}}{\sum_k a_{ik}}$ by $\min_{t,j} \frac{a_{ij}^t}{\sum_k a_{ik}^t}$ to complete the proof.

\noindent
Using this lower bound together with the fact that $\frac{b_{ij}^{t+1,*}}{b_{ij}^{t,*}} ~\in~ \left[e^{-\kappa}~,~e^{\kappa}\right]$ we can now bound~\eqref{eqn:temp-bnd} as:\[
	\sum_i b_i (e^{\kappa_i} - 1 ) \cdot \left| \log \left[ \frac{b_i B^{\rho_i / 1-\rho_i}}{\left( b_i \min_{t,j} \frac{a_{ij}^t}{\sum_k a_{ik}^t}  \right)^{1/1-\rho_i}} \right]  \right| ~=~ \sum_i b_i (e^{\kappa_i} - 1 ) \left| \log C_i - \log \Pi_i \right|,
\] where $C_i =   \left( \frac{B}{ b_i} \right)^{\frac{\rho_i}{1 - \rho_i}} $ is a constant and $\Pi_i = \left( \min\limits_{t,j}  \frac{a_{ij}^t}{\sum_k a_{ik}^t} \right)^{\frac{1}{1 - \rho_i}} $. This proves Claim~\ref{claim:PRD}. 
\qed
%\end{proof}

\section{Dynamic Fisher Markets via Convex Potential Functions}
\label{app:CPF}

%\subsection{Prior Results about Static CES Fisher Markets}
% !TEX root = main.tex

%\section{Dynamic Fisher Markets via Convex Potential Functions}\label{sec:CPF}

\newcommand{\Xicpf}{\Psi_{\textsf{CPF}}}
\newcommand{\Xicpfs}{\Xicpf^*}
\newcommand{\Xicpfst}{\Xicpf^{*,t}}
\newcommand{\Xicpfsto}{\Xicpf^{*,t+1}}

Suppose each buyer $i$ has a CES utility function
$u_i(\vecx_i) = \left( \sum_{j=1}^n a_{ij} \cdot (x_{ij})^{\rho} \right)^{1/\rho}$, where $1\ge \rho > -\infty$ and $a_{ij} \ge 0$.\footnote{For
simplicity, we assume all CES utility functions have the same $\rho$; but the convex potential function works even with distinct $\rho$'s after some obvious modifications.} Let $c =\rho/(\rho-1)$. Recall that $w_j$ is the supply of good $j$. The convex potential function for a static CES Fisher market is~\cite{CCD2013}
$$
\Xicpf(\calM, \vecp) ~=~ \sum_{j=1}^n w_j\cdot p_j  ~-~ \sum_i b_i \cdot \ln Q_i(\vecp),~~~~\text{where}~~Q_i(\vecp) ~=~ \left(\sum_{k=1}^n (a_{ik})^{1-c} (p_k)^c\right)^{1/c}.
$$
Note that $Q_i(\vecp)$ is independent of the supplies of goods and the budgets of buyers; it can be interpreted as the minimum amount of money buyer $i$ needs to use to earn one unit of utility~\cite{Cheung2014}. The minimum value of $\Xicpf(\calM, \vecp)$ is not zero in general. Hence, for applying our general framework that requires zero-value at the minimum, we use a normalized version $\Phicpf(\calM, \vecp) ~:=~ \Xicpf(\calM, \vecp) - \Xicpfs(\calM)$, where $\Xicpfs(\calM) := \min_{\vecp} \Xicpf(\calM, \vecp)$.

We study the following tatonnement price-update rule: 
\begin{equation}\label{eq:tat-update-CPF}
p_j^{t+1} ~\la~ p_j^t \cdot \left[1 + \lambda \cdot \min\{1~,~z_j^t\} \right],
\end{equation}
where $\lambda$ is a constant satisfying $0<\lambda<1/6$.

Let $\Xicpfs(\calM)$ denote the minimum value of the function $\Xicpf(\calM)$. The following theorem, stated in a simplified format from~\cite{CCD2013},
demonstrates the Price-Improvement property.
\begin{theorem}[~\cite{CCD2013}]
	Let $\vecp^0$ denote the initial prices and $\vecp^*$ denote the mark et equilibrium. Suppose prices are updated according to the rule \eqref{eq:tat-update-CPF}. If $\min_j p^0_j / p^*_j \ge q > 0$, then there exists $\delta = \delta(q,\lambda) > 0$ such that for any time $t\ge 0$, it holds $\Xicpf(\calM, \vecp^{t+1}) - \Xicpfs(\calM) ~\le~ (1-\delta)\cdot (\Xicpf(\calM, \vecp^t) - \Xicpfs(\calM))$.
\end{theorem}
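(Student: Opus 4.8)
The plan is to present $\Xicpf(\calM,\cdot)$ as a convex function whose unique minimiser is the equilibrium price vector $\vecp^*$, and to show that one multiplicative update \eqref{eq:tat-update-CPF} is essentially a coordinate-rescaled gradient-descent step that contracts the sub-optimality gap by a fixed factor on the bounded region in which the dynamics lives; this is the route of~\cite{CCD2013}. First I would fix the change of variables that exposes convexity: $\xi_j = \ln p_j$ when $0<\rho<1$ (so $c<0$) and $y_j = p_j^{\,c}$ when $\rho<0$ (so $0<c<1$), the Cobb--Douglas and linear cases being boundary cases treated separately. In these coordinates $\sum_j w_j p_j$ is strictly convex with a diagonal, positive-definite Hessian, while $-\sum_i b_i\ln Q_i(\vecp)$ is convex as well (up to the sign of $c$ it is a logarithm of a linear form, or a log-sum-exp of linear forms), so $\Xicpf(\calM,\cdot)$ is convex; and $\partial\Xicpf/\partial p_j = w_j - x_j(\vecp) = -z_j(\vecp)$, so stationarity of $\Xicpf$ is exactly market clearing, whence $\vecp^*$ is the unique minimiser and $\Xicpfs(\calM) = \Xicpf(\calM,\vecp^*)$.

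Next I would confine the orbit to a box $\{\,q'p_j^* \le p_j \le R p_j^*\,\}$, with $q',R$ depending only on $q$ and the market. The upper bound holds because $\Xicpf(\calM,\cdot)$ is coercive (it grows linearly in the prices while each $\ln Q_i$ grows only logarithmically), while by the contraction proved below --- established via a simultaneous induction on $t$ --- $\Xicpf(\calM,\vecp^t)$ is nonincreasing, so the orbit stays in a bounded sub-level set of $\Xicpf(\calM,\cdot)$. The lower bound $\min_j p_j^t/p_j^* \ge q'$ propagates in $t$ because a good priced far below equilibrium is over-demanded and thus has its price raised by \eqref{eq:tat-update-CPF}, with the hypothesis $\min_j p_j^0/p_j^* \ge q$ seeding the invariant. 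On this box $\Xicpf(\calM,\cdot)$ is \emph{strongly} convex in the chosen coordinates, with modulus $\mu = \mu(q) > 0$ coming from the diagonal $\sum_j w_j p_j$ part, and the CES demand system obeys uniform elasticity bounds there, e.g.\ $\sum_\ell \bigl| p_\ell\,\partial x_{ij}/\partial p_\ell \bigr| \le C'\,x_{ij}$ --- the one place where \emph{constant} elasticity of substitution is genuinely needed.

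I would then prove a descent lemma by Taylor-expanding $\Xicpf(\calM,\cdot)$ along $\Delta p_j = \lambda p_j^t \min\{1,z_j^t\}$: the first-order term equals $-\lambda \sum_j p_j^t\, z_j^t \min\{1,z_j^t\} \le -\lambda \sum_j p_j^t (\min\{1,z_j^t\})^2$ (the clipping only helps, since $z\cdot\min\{1,z\} \ge (\min\{1,z\})^2$), while the second-order remainder is bounded, via the elasticity estimates above applied at the intermediate prices on the segment (which differ from $\vecp^t$ only by a factor in $[1-\lambda,1+\lambda]$), by $C\lambda^2 \sum_j p_j^t (\min\{1,z_j^t\})^2$. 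Taking $\lambda < 1/6$, i.e.\ $\lambda \le 1/(2C)$, makes the first-order term dominate and gives $\Xicpf(\calM,\vecp^{t+1}) - \Xicpf(\calM,\vecp^t) \le -\tfrac{\lambda}{2} \sum_j p_j^t (\min\{1,z_j^t\})^2$, a decrease proportional to a rescaled squared gradient norm. Finally, strong convexity on the box gives a Polyak--\L{}ojasiewicz inequality: since $z_j^t$ is bounded on the box and $z_j^t = -\partial\Xicpf/\partial p_j$, one obtains $\sum_j p_j^t (\min\{1,z_j^t\})^2 \ge c_0(q)\bigl(\Xicpf(\calM,\vecp^t) - \Xicpfs(\calM)\bigr)$, and combining this with the descent lemma yields $\Xicpf(\calM,\vecp^{t+1}) - \Xicpfs(\calM) \le (1-\delta)\bigl(\Xicpf(\calM,\vecp^t) - \Xicpfs(\calM)\bigr)$ with $\delta = \delta(q,\lambda) = \tfrac{\lambda}{2} c_0(q) > 0$.

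I expect the hard part to be the descent lemma together with the elasticity bounds it rests on: one must control the CES demand Hessian $p_\ell\,\partial x_j/\partial p_\ell$ uniformly over the box (and over the intermediate prices), and do so across the full CES range $-\infty < \rho \le 1$ with its different convexifying coordinates, including the clipped regime where some $z_j^t > 1$. A secondary delicate point is establishing the box invariant --- in particular showing that $\min_j p_j^t/p_j^* \ge q'$ is preserved for all $t$, not merely at $t=0$.
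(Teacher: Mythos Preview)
The paper does not prove this theorem at all: it is quoted verbatim from~\cite{CCD2013} (``stated in a simplified format from~\cite{CCD2013}'') and used as a black box to supply the Price-Improvement property for the convex potential function $\Phicpf$. There is therefore no ``paper's own proof'' to compare your proposal against.

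Your sketch is a reasonable high-level reconstruction of the argument in~\cite{CCD2013}: the convex potential $\Xicpf$ whose gradient is minus the excess demand, the interpretation of the multiplicative tatonnement update as a price-scaled gradient step, the invariant box $q' p_j^* \le p_j^t \le R p_j^*$ seeded by the hypothesis $\min_j p_j^0/p_j^* \ge q$, a descent lemma controlling the second-order remainder via CES elasticity bounds, and a strong-convexity/PL step to turn per-step descent into geometric contraction. You correctly flag the genuinely delicate points (uniform Hessian control across the full CES range and the propagation of the lower price bound). If your goal was to supply a self-contained proof for the present paper, this outline is on the right track, but since the authors explicitly defer to~\cite{CCD2013}, the intended ``proof'' here is simply the citation.
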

For our dynamic environment, we denote the market at time $t$ by $\calM^t = (\vecu^t, \vecb^t, \vecw^t)$, and
$$
\Xicpf(\calM^t,\vecp^t) ~=~ \sum_{j=1}^n w_j^t \cdot p_j^t  ~-~ \sum_i b_i^t \cdot \ln Q_i^t(\vecp^t),~~~~\text{where}~~
Q_i^t(\vecp) ~=~ \left(\sum_{k=1}^n (a_{ik}^t)^{1-c} (p_k)^c\right)^{1/c}.
$$
Let $\Xicpfst ~=~ \min_{\vecp} \Xicpf(\calM^t,\vecp)$, and $\Phicpf(\calM^t, \vecp) = \Xicpf(\calM^t,\vecp) - \Xicpfst$.

Similar to our analysis with the misspending potential function, we establish the Market-Perturbation property in case supplies, budgets and utility functions are changing dynamically. 

%We only list the main results here and defer all details and proofs to Appendix~\ref{app:CPF}.
%
%For dynamic supply as in Section \ref{sect:dynamic-supply}, we obtain
%$$
%\Phicpf(\calM^{t+1}, \vecp^{t+1}) ~\le~ (1-\delta) \cdot \Phicpf(\calM^t, \vecp^t) ~+~ (P+B) \|\vept\|_1,~~~\text{i.e., }\Delta^t = (P+B) \|\vept\|_1.
%$$
%
%For dynamic budget, assuming that $b_i^t\in [b_i^0/C,C\cdot b_i^0]$ for some fixed constant $C\ge 1$, there exists a constant $C'$ which depends on $C$ and the initial prices only (see Appendix~\ref{app:CPF} for the precise relation), such that
%$$
%\Phicpf(\calM^{t+1}, \vecp^{t+1}) ~\le~ (1-\delta)\cdot \Phicpf(\calM^t, \vecp^t) ~+~ C'\cdot \sum_i \left|b_i^{t+1} - b_i^t\right|,
%~~~\text{i.e., }\Delta^t = C'\cdot \sum_i \left|b_i^{t+1} - b_i^t\right|.\vuppp
%$$
%
%For dynamic utilities, assume that each $a_{ij}$ can in each round be changed by some multiplicative factor $\chi_{ij}^t > 0$. Let $\chi^t = \max_{i,j} ((\chi_{ij}^t)^{-1/\rho},(\chi_{ij}^t)^{1/\rho})$ and $\chi = \max_t \chi^t$. Then
%$$
%\Phicpf(\calM^{t+1}, \vecp^{t+1}) ~\le~ (1-\delta)\cdot \Phicpf(\calM^t, \vecp^t)~+~ 2B\ln \chi^t,
%~~~\text{i.e., }\Delta^t = 2B\ln \chi^t.\vuppp\vuppp\vuppp\vuppp\vuppp
%$$

\subsection{Dynamic Supply}

Here, we consider the cases when the supplies are changing, while buyers' budgets and utility functions are fixed.
Thus, the function $Q_i^t$ and budget $b^t_i$ does not change over time, and we write $Q_i$ and $b_i$ instead.
\begin{align*}
&~~~~~~~\Phicpf(\calM^{t+1}, \vecp^{t+1}) ~=~ \Xicpf(\calM^{t+1},\vecp^{t+1}) - \Xicpfsto\\
&~\le~ \left(\sum_{j=1}^n w_j^t \cdot p_j^{t+1} ~-~ \sum_i b_i \cdot \ln Q_i(p^{t+1}) ~-~ \Xicpfst\right)
~+~ \sum_{j=1}^n p_j^{t+1} \cdot \left| w_j^{t+1} - w_j^t \right| ~+~ (\Xicpfst - \Xicpfsto)\\
&~=~ \left[\Xicpf(\calM^t,\vecp^{t+1}) - \Xicpfst\right]
~+~ \sum_{j=1}^n p_j^{t+1} \cdot \left| w_j^{t+1} - w_j^t \right| ~+~ (\Xicpfst - \Xicpfsto)\\
&~\le~ (1-\delta)\cdot \left[\Xicpf(\calM^t,\vecp^t) - \Xicpfst\right]
~+~ P \cdot \sum_{j=1}^n \left| w_j^{t+1} - w_j^t \right| ~+~ (\Xicpfst - \Xicpfsto)\\
&~=~ (1-\delta)\cdot \Phicpf(\calM^t,\vecp^t) ~+~ P \|\vept\| ~+~ (\Xicpfst - \Xicpfsto).
\end{align*}

\newcommand{\psto}{\vecp^{*,t+1}}
\newcommand{\pst}{\vecp^{*,t}}

Now we proceed to bounding $(\Xicpfst - \Xicpfsto)$.
Let $\psto$ denote the price vector which attains the minimum value of $\Xicpf(\calM^{t+1},\vecp)$. Then
\begin{align*}
\Xicpfsto
&~=~ \sum_{j=1}^n w_j^{t+1}\cdot \psto_j ~-~ \sum_i b_i \cdot \ln Q_i(\psto)\\
&~\ge~ \sum_{j=1}^n w_j^t \cdot \psto_j ~-~ \sum_i b_i \cdot \ln Q_i(\psto) ~-~ \sum_{j=1}^n \psto_j \cdot \left| w_j^{t+1} - w_j^t \right|\\
&~\ge~ \Xicpfst ~-~ \sum_{j=1}^n \psto_j \cdot \left| w_j^{t+1} - w_j^t \right|~~~~~~~~\text{(by definition of }\Xicpfst)\\
&~\ge~ \Xicpfst ~-~ B \|\vept\|_1.
\end{align*}
The last inequality holds since each equilibrium price is bounded above by $B$, the total amount of money in the market.
Thus, $(\Xicpfst - \Xicpfsto)$ is bounded above by $B\|\vept\|$.

Summarizing, we have showed
$$
\Phicpf(\calM^{t+1}, \vecp^{t+1}) ~\le~ (1-\delta) \cdot \Phicpf(\calM^t, \vecp^t) ~+~ (P+B) \|\vept\|_1,
$$
i.e., $\Delta^t = (P+B) \|\vept\|_1$.

\subsection{Dynamic Budgets}

Here, we consider the cases when the buyers' budgets are changing, while supplies and buyers' utility functions are fixed.
\begin{align*}
&~~~~~\Phicpf(\calM^{t+1}, \vecp^{t+1}) ~=~ \Xicpf(\calM^{t+1},\vecp^{t+1}) - \Xicpfsto\\
&~=~ \left( \sum_{j=1}^{n} w_j \cdot p_j^{t+1} ~-~ \sum_i b_i^t\cdot \ln Q_i(\vecp^{t+1}) ~-~ \Xicpfst \right)
~-~ \sum_i (b_i^{t+1} - b_i^t)\cdot \ln Q_i(\vecp^{t+1}) ~+~ (\Xicpfst - \Xicpfsto)\\
&~=~ \left[ \Xicpf(\calM^t,\vecp^{t+1}) - \Xicpfst\right]
~-~ \sum_i (b_i^{t+1} - b_i^t)\cdot \ln Q_i(\vecp^{t+1}) ~+~ (\Xicpfst - \Xicpfsto)\\
&~\le~ (1-\delta)\cdot \Phicpf(\calM^t, \vecp^t)
~-~ \sum_i (b_i^{t+1} - b_i^t)\cdot \ln Q_i(\vecp^{t+1}) ~+~ (\Xicpfst - \Xicpfsto).
\end{align*}
\begin{align*}
\Xicpfsto
&~=~ \sum_{j=1}^{n} w_j\cdot \psto_j ~-~ \sum_i b_i^{t+1}\cdot \ln Q_i(\psto)\\
&~=~ \sum_{j=1}^{n} w_j\cdot \psto_j ~-~ \sum_i b_i^t \cdot \ln Q_i(\psto) ~-~ \sum_i (b_i^{t+1} - b_i^t)\cdot \ln Q_i(\psto)\\
&~\ge~ \Xicpfst ~-~ \sum_i (b_i^{t+1} - b_i^t)\cdot \ln Q_i(\psto).
\end{align*}

Combining the above two inequalities yields
$$
\Phicpf(\calM^{t+1}, \vecp^{t+1}) ~\le~ (1-\delta)\cdot \Phicpf(\calM^t, \vecp^t)
~+~ \sum_i (b_i^{t+1} - b_i^t) \cdot \ln \frac{Q_i(\psto)}{Q_i(\vecp^{t+1})}.
$$
Cheung et al.~\cite[Section 6.3]{CCD2013} showed that in the static market setting,
if the initial prices are neither too high nor too low, then $\frac{Q_i(\psto)}{Q_i(\vecp^{t+1})}$
has time-independent upper and lower bounds.
In the dynamic market setting, we assume that there exists a constant $C\ge 1$ such that
the budget of each buyer $i$ changes within the range $[b_i^0/C,C\cdot b_i^0]$.
Let $U^*,L^*$ be the time-independent upper and lower bounds derived in~\cite{CCD2013},
for the static market setting with $\vecb = (b_1^0,\ldots,b_m^0)$.
Following the argument in~\cite{CCD2013}, their upper bound on $p_k^{t+1}$ can be carried through to the dynamic market setting
by increasing by a factor of $C$,
while their lower bound on $p_k^{t+1}$ can be carried through to the dynamic market setting by shrinking by a factor of $1/C$;
these hold similarly for the equilibrium prices.
Thus, for the dynamic market setting, we have time-independent upper and lower bounds on $\frac{Q_i(\psto)}{Q_i(\vecp^{t+1})}$
of values $C^2\cdot U^*$ and $L^*/C^2$ respectively.
Thus, by setting
$$
C' ~:=~ \max \left\{~\left|\ln (C^2\cdot U^*)\right| ~,~ \left|\ln (L^*/C^2)\right|~\right\},
$$
we have
$$
\Phicpf(\calM^{t+1}, \vecp^{t+1}) ~\le~ (1-\delta)\cdot \Phicpf(\calM^t, \vecp^t) ~+~ C'\cdot \sum_i \left|b_i^{t+1} - b_i^t\right|,
$$
i.e., $\Delta^t = C'\cdot \sum_i \left|b_i^{t+1} - b_i^t\right|$.

\subsection{Dynamic Buyer Utility}

Here, we consider the cases when the buyers' utility function are changing, while supplies and budgets are fixed.
The changes to utility functions induce changes to the functions $Q_i^t$.

\begin{align*}
&~~~~~\Phicpf(\calM^{t+1}, \vecp^{t+1}) ~=~ \Xicpf(\calM^{t+1},\vecp^{t+1}) - \Xicpfsto\\
&~=~ \left( \sum_{j=1}^{n} w_j \cdot p_j^{t+1} ~-~ \sum_i b_i\cdot \ln Q_i^t(\vecp^{t+1}) ~-~ \Xicpfst \right)
~-~ \sum_i b_i\cdot \ln \frac{Q_i^{t+1}(\vecp^{t+1})}{Q_i^t(\vecp^{t+1})} ~+~ (\Xicpfst - \Xicpfsto)\\
&~=~ \left[ \Xicpf(\calM^t,\vecp^{t+1}) - \Xicpfst\right]
~-~ \sum_i b_i\cdot \ln \frac{Q_i^{t+1}(\vecp^{t+1})}{Q_i^t(\vecp^{t+1})} ~+~ (\Xicpfst - \Xicpfsto)\\
&~\le~ (1-\delta)\cdot \Phicpf(\calM^t, \vecp^t) ~-~ \sum_i b_i\cdot \ln \frac{Q_i^{t+1}(\vecp^{t+1})}{Q_i^t(\vecp^{t+1})} ~+~ (\Xicpfst - \Xicpfsto).
\end{align*}

\begin{align*}
\Xicpfsto
&~=~ \sum_{j=1}^{n} w_j\cdot \psto_j ~-~ \sum_i b_i\cdot \ln Q_i^{t+1}(\psto)\\
&~=~ \sum_{j=1}^{n} w_j\cdot \psto_j ~-~ \sum_i b_i \cdot \ln Q_i^t(\psto) ~-~ \sum_i b_i\cdot \ln \frac{Q_i^{t+1}(\psto)}{Q_i^t(\psto)}\\
&~\ge~ \Xicpfst ~-~ \sum_i b_i\cdot \ln \frac{Q_i^{t+1}(\psto)}{Q_i^t(\psto)}.
\end{align*}

Combining yields
$$
\Phicpf(\calM^{t+1}, \vecp^{t+1}) ~\le~ (1-\delta)\cdot \Phicpf(\calM^t, \vecp^t)
~+~ \sum_i b_i \cdot \ln \left(\frac{Q_i^{t+1}(\psto)}{Q_i^t(\psto)} \cdot \frac{Q_i^t(\vecp^{t+1})}{Q_i^{t+1}(\vecp^{t+1})} \right).
$$
Starting from the initial utility values, each $a_{ij}$ can in each round be changed by some multiplicative factor $\chi_{ij}^t$.
Let $\chi^t = \max_{i,j} ((\chi_{ij}^t)^{-1/\rho},(\chi_{ij}^t)^{1/\rho})$ and $\chi = \max_t \chi^t$.
Note that $(1-c)/c = -1/\rho$, so $1/\chi^t \le Q_i^{t+1}(\vecp) / Q_i^t(\vecp) \le \chi^t$ for any price vector $\vecp$.
Thus,
$$
\left|\ln \left(\frac{Q_i^{t+1}(\psto)}{Q_i^t(\psto)} \cdot \frac{Q_i^t(\vecp^{t+1})}{Q_i^{t+1}(\vecp^{t+1})} \right)\right| ~\le~ 2\ln \chi^t,
$$
and hence
$$
\Phicpf(\calM^{t+1}, \vecp^{t+1}) ~\le~ (1-\delta)\cdot \Phicpf(\calM^t, \vecp^t)~+~ 2B\ln \chi^t,
$$
i.e., $\Delta^t = 2B\ln \chi^t$.

\section{Further Applications}\label{sect:applications}

\subsection{Gradient Descent on Shifting Strongly Convex Functions }

In this section, we analyze the performance of gradient descent on a sequence of convex functions satisfying $\alpha$-strong convexity and $\beta$ smoothness. Specifically, let $\calF(\alpha, \beta)$ be a family of convex functions where each function is $\beta$-smooth and $\alpha$-strongly convex. Any individual function in this family is parameterized by the minimizing point of the function. It is known that gradient descent algorithm converges geometrically for smooth and strongly convex functions. Specifically, if $x^*$ is the minimizing point of the function and $x^t$ is the point chosen by gradient descent update in round $t$, then:
\begin{theorem}[Theorem 3.12,~\cite{Bubeck}]
Let $f$ be a $\beta$-smooth and $\alpha$-strongly convex function. Then gradient descent algorithm with step size $\eta_t \leq \frac{2}{\alpha + \beta}$ on $f$ satisfies
\[
	\norm{x^{t+1} - x^*}^2 ~\leq~ \left( 1 - \frac{2 \eta_t \alpha \beta}{\alpha + \beta} \right) \norm{x^{t} - x^*}^2.
\]
\end{theorem}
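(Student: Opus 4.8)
The plan is to unfold a single gradient step and control the inner-product cross term via the standard co-coercivity estimate available for functions that are simultaneously $\alpha$-strongly convex and $\beta$-smooth. Writing $x^{t+1} = x^t - \eta_t \nabla f(x^t)$ and expanding the square,
\[
\norm{x^{t+1} - x^*}^2 ~=~ \norm{x^t - x^*}^2 ~-~ 2\eta_t \langle \nabla f(x^t),\, x^t - x^* \rangle ~+~ \eta_t^2 \norm{\nabla f(x^t)}^2 .
\]
Everything hinges on a lower bound for $\langle \nabla f(x^t), x^t - x^* \rangle$ that simultaneously contributes a term proportional to $\norm{x^t - x^*}^2$ (which drives the contraction) and a term proportional to $\norm{\nabla f(x^t)}^2$ (which can absorb the $O(\eta_t^2)$ error), so that the step-size restriction makes the error harmless.

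The second step is to establish, for all $x,y$, the co-coercivity inequality
\[
\langle \nabla f(x) - \nabla f(y),\, x - y \rangle ~\ge~ \frac{\alpha\beta}{\alpha+\beta}\,\norm{x-y}^2 ~+~ \frac{1}{\alpha+\beta}\,\norm{\nabla f(x) - \nabla f(y)}^2 .
\]
The cleanest derivation applies the classical co-coercivity of smooth convex functions to the auxiliary function $h(x) = f(x) - \tfrac{\alpha}{2}\norm{x}^2$: strong convexity of $f$ makes $h$ convex, $\beta$-smoothness of $f$ makes $h$ be $(\beta-\alpha)$-smooth, hence $\langle \nabla h(x) - \nabla h(y), x-y\rangle \ge \tfrac{1}{\beta-\alpha}\norm{\nabla h(x)-\nabla h(y)}^2$; substituting $\nabla h = \nabla f - \alpha\,\mathrm{id}$ and rearranging yields the displayed bound. (When $\alpha=\beta$, $f$ is quadratic and the inequality is an identity, which can be checked directly.)

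The third step instantiates this at $y = x^*$, using $\nabla f(x^*)=0$, to get
\[
\langle \nabla f(x^t),\, x^t - x^* \rangle ~\ge~ \frac{\alpha\beta}{\alpha+\beta}\,\norm{x^t-x^*}^2 ~+~ \frac{1}{\alpha+\beta}\,\norm{\nabla f(x^t)}^2 ,
\]
and plugs it into the expansion, producing
\[
\norm{x^{t+1} - x^*}^2 ~\le~ \left(1 - \frac{2\eta_t\alpha\beta}{\alpha+\beta}\right)\norm{x^t-x^*}^2 ~+~ \eta_t\left(\eta_t - \frac{2}{\alpha+\beta}\right)\norm{\nabla f(x^t)}^2 .
\]
Since $0 < \eta_t \le \tfrac{2}{\alpha+\beta}$, the coefficient of $\norm{\nabla f(x^t)}^2$ is non-positive and that term is discarded, leaving exactly the claimed contraction.

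\textbf{Main obstacle.} The only non-routine ingredient is the co-coercivity inequality in the second step; everything else is bookkeeping with the step-size constraint. If one prefers not to invoke it as a black box, it can be obtained from first principles by combining, for each pair $x,y$, the descent-lemma upper bound from $\beta$-smoothness with the quadratic lower bound from $\alpha$-strong convexity, applied symmetrically in $x$ and $y$ and summed — but passing through the shifted function $h$ is the shortest route.
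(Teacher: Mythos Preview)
Your argument is correct and is exactly the standard proof of this result (it is, in fact, essentially how Bubeck proves Theorem~3.12 in the cited monograph, via the co-coercivity inequality that appears there as Lemma~3.11). The present paper does not prove this statement at all --- it is quoted as a black box from~\cite{Bubeck} and used only to instantiate the Price-Improvement property for the gradient-descent application --- so there is nothing to compare against beyond noting that your derivation matches the source.
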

Since the gradient descent dynamic converges rapidly in the static case, one would expect the dynamic to also closely follow the minimizing point when the underlying function shifts only a little bit every round. We formalize this notion next.

We note that the system can be viewed as an instance of the Lyapunov dynamical system described in Section~\ref{sect:general-framework} where the gradient descent update is the evolution rule, the optimum point $x^{*,t}$ the system parameter in round $t$ and the distance $G_{x^{*,t}}(x) = \norm{x - x^{*,t}}$ the Lyapunov function. If we denote by $\Delta^t =  \norm{\xs{t+1} - \xs{t}}$, then $ G_{x^{*,t+1}} \leq G_{x^{*,t}} + \Delta^t$. We obtain the following proposition.

\begin{proposition}
\label{prop:gradient-descent}
Let $\calF(\alpha, \beta)$ be a family of $\alpha$-strongly convex and $\beta$-smooth functions and let $f^t  \in \calF $ be the function chosen by the system in round $t$.
\begin{enumerate}
\item For $\Phi(x^{*,t}, x^t) = G_{x^{*,t}}(x^t)$, $\Delta^t := \norm{\xs{t+1} - \xs{t}}$ and $\delta = \frac{2 \eta_t \alpha \beta}{\alpha + \beta}$, we have
$$ \Phi(x^{*,T}, x^T)  ~\leq~  (1 - \delta)^{T/2} \cdot \Phi(x^{*,0}, x^0) ~+~ \sum_{t=1}^T (1-\delta)^{\frac{T-t}{2}} \cdot \Delta^t\enspace .$$
\item Further, if $\Delta^t \leq d$ for all rounds $t$, then: $$\Phi(x^{*,T}, x^T) ~\leq~  (1 - \delta)^{T/2} \cdot  \Phi(x^{*,0}, x^0) ~+~ \frac{2d}{\delta}\enspace. $$
\end{enumerate}
\end{proposition}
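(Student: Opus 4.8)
The plan is to recognize the shifting gradient-descent process as an instance of the parametrized Lyapunov framework of Section~\ref{sect:general-framework} and then invoke Theorem~\ref{thm:meta}. I would take the iterate $x^t$ as the control variable, the minimizer $x^{*,t}$ of $f^t$ as the system parameter, and $G_{x^{*,t}}(x) = \norm{x - x^{*,t}}$ as the Lyapunov function; it is non-negative and vanishes exactly at the (unique) fixed point $x = x^{*,t}$ of the gradient-descent map associated with $f^t$.

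First I would verify the two hypotheses of Theorem~\ref{thm:meta}. For the multiplicative-decrease property, the cited Theorem~3.12 of~\cite{Bubeck}, applied to $f^t$, gives $\norm{F(x) - x^{*,t}}^2 \le (1-\delta)\,\norm{x - x^{*,t}}^2$ for every $x$, where $F$ is the gradient step for $f^t$ and $\delta = \tfrac{2\eta_t\alpha\beta}{\alpha+\beta}\in(0,1]$; taking square roots yields $G_{x^{*,t}}(F(x)) \le (1-\delta)^{1/2}\,G_{x^{*,t}}(x)$, so the system is linearly converging with decay parameter $1-(1-\delta)^{1/2}>0$ (if $\eta_t$ varies across rounds, take the infimum of these decay parameters, which is positive by assumption). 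For the Market-Perturbation property, the triangle inequality gives $G_{x^{*,t+1}}(x) = \norm{x - x^{*,t+1}} \le \norm{x - x^{*,t}} + \norm{x^{*,t} - x^{*,t+1}} = G_{x^{*,t}}(x) + \Delta^{t}$ for every $x$, with $\Delta^{t} := \norm{x^{*,t+1} - x^{*,t}}$, exactly as recorded in the text preceding the proposition.

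With these two properties in hand, Theorem~\ref{thm:meta} applied with decay factor $(1-\delta)^{1/2}$ in place of $1-\delta$ gives Part~1 directly: $\Phi(x^{*,T},x^T) \le (1-\delta)^{T/2}\,\Phi(x^{*,0},x^0) + \sum_{t=1}^{T}(1-\delta)^{(T-t)/2}\Delta^t$. For Part~2, under $\Delta^t\le d$ for all $t$, I would bound the sum by a geometric series,
\[
\sum_{t=1}^{T}(1-\delta)^{(T-t)/2}\,\Delta^t ~\le~ d\sum_{s=0}^{\infty}(1-\delta)^{s/2} ~=~ \frac{d}{1-(1-\delta)^{1/2}} ~\le~ \frac{2d}{\delta},
\]
where the final step uses the elementary inequality $1-(1-\delta)^{1/2}\ge \delta/2$ for $\delta\in[0,1]$, which follows from $(1-\delta/2)^2 = 1-\delta+\delta^2/4\ge 1-\delta$. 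Combining this with Part~1 yields the stated bound.

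The one step that is not purely mechanical is the mismatch between the quantity we wish to track — the distance $\norm{x-x^{*,t}}$, which behaves additively under a shift of $x^{*,t}$ via the triangle inequality and so fits the Market-Perturbation template — and the available contraction estimate, which is stated for the squared distance. Passing to the square root is what forces the decay parameter down to $(1-\delta)^{1/2}$ rather than $1-\delta$; recovering the clean constant $2d/\delta$ in Part~2 then hinges on $1-(1-\delta)^{1/2}\ge\delta/2$, and everything else is a direct specialization of Theorem~\ref{thm:meta}.
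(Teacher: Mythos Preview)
Your proposal is correct and follows essentially the same approach as the paper: identify the system as a PLCLDS with Lyapunov function $G_{x^{*,t}}(x)=\norm{x-x^{*,t}}$, take square roots in the cited contraction bound to get decay factor $(1-\delta)^{1/2}$, use the triangle inequality for the perturbation, and apply Theorem~\ref{thm:meta}. The paper's proof is extremely terse (two sentences), and you have correctly supplied the details it omits, including the inequality $1-(1-\delta)^{1/2}\ge \delta/2$ needed to get the constant $2d/\delta$ in Part~2.
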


\begin{proof}
The first part follows from Theorem~\ref{thm:meta} and the fact that $\norm{x^{t+1} - \xs{t}} ~\leq~ \left( 1 - \delta \right)^{1/2} \norm{x^{t} - \xs{t}}$. The second part follows by taking the sum of the geometric series.
%\begin{align*}
%\Phi(T)  ~\leq~ &  (1 - \delta)^{T/2} \Phi(0) ~+~ d \sum\limits_{t=0}^{T-1} (1 - \delta)^{t/2} \\[4pt]
%	~\leq~ &  (1 - \delta)^{T/2} \Phi(0) ~+~ \frac{d}{1 - (1 - \delta)^{1/2}} \\[4pt]
%	~\leq~ &  (1 - \delta)^{T/2} \Phi(0) ~+~ \frac{2d}{\delta}.\qedhere
%\end{align*}
\end{proof}
The proposition implies that irrespective of the starting position, the gradient descent dynamic \textit{follows} the optima such that the chosen sequence of points are always approximately within $2d/\delta$ of the optimal.

\parabold{A Remark on Regret.}
The bound in Proposition~\ref{prop:gradient-descent}(2) can be extended to obtain corresponding guarantees in terms of regret.  
\begin{align*}
\sum\limits_{t}  f^t(x^t) - f^t(\xs{t})  ~\leq~ &  \sum\limits_{t}  \frac{\beta}{2} \norm{x^{t} - \xs{t}}^2  ~\leq~  \sum\limits_{t}  \frac{\beta}{2} \left( (1 - \delta)^{t/2}  \Phi(x^{*,0}, x^0) ~+~ \frac{2d}{\delta}\right)^2  \\[4pt]
~\leq~ & \beta \sum\limits_{t} \left(  \left( (1 - \delta)^{t/2}  \Phi(x^{*,0}, x^0) \right)^2  + \left(\frac{2d}{\delta}\right)^2  \right)\\[4pt]
~\leq~ & \frac{\beta  d}{\delta}  \left( \Phi(x^{*,0}, x^0)\right)^2  + \beta T \left(\frac{2d}{\delta}\right)^2 .
\end{align*}

The first inequality follows by $\beta$-smoothness of the functions. This bound implies that the average regret incurred as the dynamic approaches the region of radius $2d/\delta$ from the optimal point is bounded by a constant. 

%Similar results have already been proved before in context of dynamic regret bounds. Note that since the optimal point shifts by $O(T)$ in $T$ rounds, obtaining a sub-linear regret is not possible. Nevertheless, we can interpret this bound as the total loss incurred as the update process approaches the neighborhood of the optimal points.

\subsection{Load Balancing with Dynamic Machine Speed}

Consider a setting with $n$ distinct machines all connected to each other to form an arbitrary network. For ease of notation, we label the machines as $m_i$ for $i=1$ to $n$. Each machine $m_i$ can process jobs at speed $s_i$. Jobs/tasks, assumed to be infinitely divisible, of total weight $M$ are arbitrarily distributed over the network. Our goal is to design a decentralized load balancing algorithm with the objective that the total processing time over all machines is minimized. 

\begin{wrapfigure}{L}{0.5\textwidth}
    \begin{minipage}{0.5\textwidth}
      \begin{algorithm}[H]
      \DontPrintSemicolon
		\caption{Diffusion}
		\label{algo:diffuse}
		  		  
		\For{$t=1$ to $T$}{
		  	\For{each machine $m_i$}{
		  			$f_i^{(t)} \leftarrow \text{total processing time on } m_i$ \;
		  			broadcast $f_i^{(t)}$ to all $j \in nbd(m_i)$ \;
		  			\ForAll{$j \in nbd(m_i)$}{
		  				\textbf{if} $f_i^{(t)} > f_j^{(t)}$ \textbf{then} send $P_{ij} (f_i^{(t)} - f_j^{(t)}) s_i$ load to $j$.\;
		  				
		  			}
				}
		  }
		\end{algorithm}
    \end{minipage}
  \end{wrapfigure}

Before proceeding, we set up some notation. $\vecs$ denotes the vector of machine speeds. $\vecll^{(t)} = (\ell_i^{(t)})_{i}$ denotes the vector of loads and $ \vecf^{(t)} = ( f_i^{(t)})_{i}$ the corresponding finishing times at round $t$. We assume throughout that the total load stays constant i.e. $\sum_i \ell_i^{(t)} = M$. For machine speed $
\vecs$, $f^{*, \vecs}$ denotes the corresponding vector of finishing times in the balanced state, i.e. a state where the finishing time of all machines is the same.
%\YKC{It looks to me that $f_i$ is defined as $\ell_i / s_i$. Thus, it's better to define $\ell_i$ and $s_i$ first, then $f_i$. Paresh, please change accordingly.}

Algorithm~\ref{algo:diffuse} is based on the diffusion principle~\cite{sub94}, where if a machine has more jobs than its neighbours, then some jobs diffuse to the neighbour. In our context, since the goal is to equalize the finishing times of all machines, the number of jobs that diffuse is proportional to the difference in the finishing times. The proportionality constant depends on the connecting edge. Specifically, in the algorithm that follows we use a diffusivity matrix $P$ satisfying the following conditions: (a) $P_{ii} \geq 1/2$ (b) $P_{ij} > 0$ iff $(i,j)$ is an edge in $G$. (c) $P$ is symmetric and stochastic, i.e., for every machine $m_i$, $\sum_j P_{ij} = 1$.

%\begin{algorithm}
%\caption{DIFFUSE}
%\label{algo:diffuse}
%  
%\begin{algorithmic}[1]
%  
%  \For{$t= 1$ to $T$}
%  		\For{\text{each machine $m_i$}}
%  			\State $f_i^{(t)} \leftarrow \text{total processing time on $m_i$}$
%  			\State broadcast $f_i^{(t)}$ to all $j \in nbd(m_i)$
%  			\ForAll{$j \in nbd(m_i)$}
%  				\If{$f_i^{(t)} > f_j^{(t)} $}
%  					\State Send $P_{ij} \cdot (f_i^{(t)} - f_j^{(t)}) \cdot s_i$ load to $j$.
%  				\EndIf
%  			\EndFor
%		\EndFor
%  \EndFor
%  \end{algorithmic}
%\end{algorithm}

If each machine $m_i$ uses the load balancing protocol as described above, then the finishing time of machine $m_i$ at time $t+1$ is:
\begin{align*}
f_i^{(t+1)} ~=~  & \frac{\ell_i^{(t+1)}}{s_i} ~=~ \frac{1}{s_i} \left( \ell_i^{(t)} - \sum\limits_{j=1}^n P_{ij} \left( \frac{\ell_i^{(t)}}{s_i} - \frac{\ell_j^{(t)}}{s_j} \right) s_i \right)\\
~=~  & \frac{1}{s_i} \left( \ell_i^{(t)} - \sum\limits_{j=1}^n P_{ij} \ell_i^{(t)} + \sum\limits_{j=1}^n P_{ij} \frac{\ell_j^{(t)}}{s_j} s_i \right)\\
~=~  & \frac{1}{s_i} \left( \sum\limits_{j=1}^n P_{ij} f_j^{(t)} s_i \right) = \left( P f^{(t)} \right)_i
\end{align*}
It therefore follows that $ \vecf^{(t+1)} =  P \vecf^{(t)} $. Further, in the balanced state $\vecf^*$, since the finishing time of all machines is the same,
\[
	\left( P \vecf^{*} \right)_i ~=~ \sum\limits_{j=1}^n P_{ij} f_j^{*} =  f_i^{*} \sum\limits_{j=1}^n P_{ij} = f_i^{*}
\] i.e. $P \vecf^{*}  = \vecf^{*}$. If we denote the \textit{error} in round $t+1$ by $\vece^{(t+1)}$, then:
\[
	\vece^{(t+1)} ~=~  \vecf^{(t+1)} - \vecf^*  ~=~  P (\vecf^{(t)} - \vecf^*)  ~=~ P \vece^{(t)}
\] i.e. the same transformations apply to the error vector as well. Since $P$ is a symmetric matrix, it has $n$ eigenvalues $ \lambda_1 , \lambda_2 \cdots \lambda_n$ and linearly independent corresponding eigenvectors. By theory of Markov chains, it is also known that $1 = \mod{\lambda_1} \geq   \mod{\lambda_2} \geq \cdots \mod{\lambda_n}$. Since $P$ scales the length of $\vece^{(t)}$ by a factor of $ \leq \mod{\lambda_2}$:
\begin{equation}
\label{eqn:err-bound}
	\norm{\vece^{(t+1)}} ~=~  \norm{P \vece^{(t)}} ~\leq~ \mod{\lambda_2} \norm{\vece^{(t)}} ~\Rightarrow~ \norm{\vece^{(t+1)}}  ~\leq~  \mod{\lambda_2}^t \norm{\vece^{(0)}}.
\end{equation}
\noindent
For a given speed vector $\vecs$, one can define the ``potential" as the normed distance: $\norm{\vecf^{(t)} - \vecf^{*, \vecs}}_1$. This measures the imbalance in the network in terms of the finishing times. From \eqref{eqn:err-bound}, since the error vector $\vece$ converges to zero linearly, the potential at balanced state is zero. Note that this load balancing setting is identical to the Lyapunov dynamical system introduced in Section~\ref{sect:general-framework}. Specifically, the speed vector $\vecs$ is the \textit{system parameter}, the evolution function $F(\vecll^{(t)})$ is the diffusion process as described in Algorithm~\ref{algo:diffuse} and the potential as mentioned above corresponds to the Lyapunov function $G_{\vecs}(\vecll^{(t)}) = G_{\vecs}^t$. Note that by \eqref{eqn:err-bound} it follows that $G_{\vecs}^{t+1} ~\leq~  \mod{\lambda_2}^t G_{\vecs}^{0}$. In the following, all norms are assumed to be L1 norms.

\begin{proposition}
For a speed vector $\vecs$ and an arbitrary load profile vector $\vecll$, let $\vecf$ denote the corresponding finishing time vector. For a Lyapunov function defined as $G_{\vecs} = \norm{\vecf - \vecf^{*, \vecs}}$, if the speed vector changes to $\vecs'$ for the same load profile, then:
\[
	G_{\vecs'} ~\leq~ G_{\vecs} ~+~ Mn \mod{\frac{1}{\norm{\vecs'}} - \frac{1}{\norm{\vecs}} }.
\]
\end{proposition}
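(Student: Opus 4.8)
The plan is to obtain the bound as a one-line triangle-inequality estimate in the $\ell_1$-norm, with the two balanced finishing-time vectors $\vecf^{*,\vecs}$ and $\vecf^{*,\vecs'}$ serving as reference points.

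First I would pin down the balanced state explicitly. Since the diffusion dynamics conserve the total load $M$, in any configuration where all machines share a common finishing time $\tau$ we must have $\sum_i \tau s_i = M$, so $\tau = M/\norm{\vecs}$; hence $f^{*,\vecs}_i = M/\norm{\vecs}$ for every machine $i$, and likewise $f^{*,\vecs'}_i = M/\norm{\vecs'}$. Both target vectors are therefore constant across coordinates, so their $\ell_1$-distance is just $n$ times the gap between the two scalars:
\[
\norm{\vecf^{*,\vecs} - \vecf^{*,\vecs'}} ~=~ \sum_{i=1}^n \mod{\frac{M}{\norm{\vecs}} - \frac{M}{\norm{\vecs'}}} ~=~ Mn \cdot \mod{\frac{1}{\norm{\vecs}} - \frac{1}{\norm{\vecs'}}}.
\]

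The statement then follows by inserting $\vecf^{*,\vecs}$ as an intermediate point and using the triangle inequality:
\[
G_{\vecs'} ~=~ \norm{\vecf - \vecf^{*,\vecs'}} ~\le~ \norm{\vecf - \vecf^{*,\vecs}} ~+~ \norm{\vecf^{*,\vecs} - \vecf^{*,\vecs'}} ~=~ G_{\vecs} ~+~ Mn \cdot \mod{\frac{1}{\norm{\vecs}} - \frac{1}{\norm{\vecs'}}}.
\]
This is precisely the Market-Perturbation property for the diffusion system, with per-round increment $\Delta^t = Mn\cdot\mod{1/\norm{\vecs^t} - 1/\norm{\vecs^{t-1}}}$, so it feeds directly into Theorem~\ref{thm:meta} to give a recovery guarantee for load balancing under time-varying speeds.

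The argument is short, so the only points that will need care are (i) the characterization of $\vecf^{*,\vecs}$ --- i.e.\ that load conservation forces the common finishing time of a balanced configuration to equal $M/\norm{\vecs}$ --- and (ii) keeping straight which quantities are frozen when the system parameter $\vecs$ is perturbed to $\vecs'$. Once these are settled, the triangle-inequality computation displayed above finishes the proof.
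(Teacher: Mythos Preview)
Your proposal is correct and follows essentially the same route as the paper: apply the triangle inequality to insert $\vecf^{*,\vecs}$, then compute $\norm{\vecf^{*,\vecs}-\vecf^{*,\vecs'}}$ explicitly from the closed form $f_i^{*,\vecs}=M/\norm{\vecs}$. The only cosmetic difference is that the paper phrases the identification of the balanced state via a constrained minimization plus a symmetry remark, whereas your load-conservation argument ($\sum_i \tau s_i = M \Rightarrow \tau = M/\norm{\vecs}$) gets there more directly.
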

\begin{proof}

For speed vector changes $\vecs'$ and the same load profile, the Lyapunov function is given by:
\begin{equation}
\label{eqn:Lyanpunv-fn}
	G_{\vecs'} ~=~ \norm{\vecf - \vecf^{*, \vecs'}} ~\leq~ \norm{\vecf - \vecf^{*, \vecs}} ~+~ \norm{\vecf^{*, \vecs'} - \vecf^{*, \vecs}} ~=~ G_{\vecs} + \norm{\vecf^{*, \vecs'} - \vecf^{*, \vecs}}.
\end{equation}

Let $\ell_i$ denote the load on machine $m_i$. The optimal load on the machines in a balanced state can be characterized using the following optimization problem:
\begin{align*}
	\min~  \sum\limits_i \frac{\ell_i}{s_i}  \hspace{1cm} \text{s.t}~~ \sum\limits_i  \ell_i = M.
\end{align*}
Using the underlying symmetry, we can claim that the load on any machine $m_i$ in the balanced state and its corresponding finishing time are $ \ell_i^* = \frac{s_i \cdot M}{\sum_k s_k}$ and $f_i^{*, \vecs} = \frac{\ell_i^*}{s_i} = \frac{M}{\sum_k s_k}$ respectively. It then follows that:
$$
~~~~~~~~~~~\norm{\vecf^{*, \vecs'} - \vecf^{*, \vecs}} ~=~ \sum_i \mod{ \frac{\ell^{'*}_i}{s'_i} -  \frac{\ell_i^*}{s_i}} ~=~ \sum_i \mod{ \frac{M}{\sum_k s'_k} - \frac{M}{\sum_k s_k} } ~=~ Mn \mod{\frac{1}{\norm{\vecs'}} - \frac{1}{\norm{\vecs}} } .~~~~~~~~~~~\qedhere
$$
\end{proof}

To formalize the problem, let $\calL\calB(N, M)$ be a family of load balancing environments where $N$ denotes the network of underlying machines and $M$ the total weight of jobs. Each individual environment $LB_{\vecs} \in \calL\calB(N, M)$ is parameterized by the machine-speed vector $\vecs$. The corresponding potential (also Lyapunov) function is denoted by $G_{\vecs}$.

\begin{proposition}
\label{thm:load-balancing}
Let $\calL\calB(N, M)$ be a family of load balancing environments on $n$ machines with the corresponding diffusivity matrix being $P_N$. Let $\vecs^0, \vecs^1, \cdots , \vecs^T$ denote the vector of machine speeds at times $0, 1 \cdots T$ respectively. If we denote by $\lambda_2$ the second largest eigenvalue of $P_N$ and $\Phi(\vecs^t, \vecll^t) := G_{\vecs^t}(\vecll^t)$, then
\[
	\Phi(\vecs^T,\vecll^T) ~\le~ \mod{\lambda_2}^T \cdot \Phi(\vecs^0,\vecll^0) ~+~ Mn  \sum_{t=1}^T \mod{\lambda_2}^{T-t} \cdot \mod{\frac{1}{\norm{\vecs^{t+1}}} - \frac{1}{\norm{\vecs^{t}}} }\enspace .
\]
\end{proposition}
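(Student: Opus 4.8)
The plan is to recognize the dynamic load-balancing process as an instance of the parametrized, linearly converging Lyapunov dynamical system framework of Section~\ref{sect:general-framework} and to obtain the bound by a direct application of Theorem~\ref{thm:meta}. Here the system parameter is the speed vector $\vecs$, the control variable is the load profile $\vecll$, the evolution rule $F_{\vecs}$ is one step of Algorithm~\ref{algo:diffuse} (which, passing to finishing times via $\vecf = \vecll/\vecs$ componentwise, acts as multiplication by the fixed diffusivity matrix $P_N$ and then back), and the Lyapunov function is $G_{\vecs}(\vecll) = \norm{\vecf - \vecf^{*,\vecs}}$.

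First I would check hypotheses (a) and (c) of the PLCLDS definition. Non-negativity of $G_{\vecs}$ is immediate, and it vanishes exactly at the balanced state $\vecf^{*,\vecs}$, which is the fixed point of $F_{\vecs}$, giving (a). For (c), observe that within a single round the speed $\vecs^t$ is fixed; since $P_N \vecf^{*,\vecs^t} = \vecf^{*,\vecs^t}$, applying $F_{\vecs^t}$ maps the error vector $\vece = \vecf - \vecf^{*,\vecs^t}$ by $P_N$, so \eqref{eqn:err-bound} (one step) yields $G_{\vecs^t}(F_{\vecs^t}(\vecll)) = \norm{P_N(\vecf - \vecf^{*,\vecs^t})} \le \mod{\lambda_2}\cdot G_{\vecs^t}(\vecll)$. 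Hence each $LB_{\vecs}\in\calL\calB(N,M)$ is an LCLDS with decay parameter $\delta = 1 - \mod{\lambda_2}$; because $P_N$ depends only on the network $N$ and not on $\vecs$, this $\delta$ is common to all members of the family, so $\delta(\calL\calB(N,M)) = 1 - \mod{\lambda_2} > 0$ — here one uses that the network is connected, aperiodicity being already ensured by $P_{ii}\ge 1/2$ — and $\calL\calB(N,M)$ is a PLCLDS.

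Next I would supply the Market-Perturbation property, which is exactly the content of the preceding proposition: for a fixed load profile, changing the speed from $\vecs^{t-1}$ to $\vecs^t$ gives $G_{\vecs^t}(\vecll) \le G_{\vecs^{t-1}}(\vecll) + Mn\mod{\frac{1}{\norm{\vecs^t}} - \frac{1}{\norm{\vecs^{t-1}}}}$, so we may take $\Delta^t = Mn\mod{\frac{1}{\norm{\vecs^t}} - \frac{1}{\norm{\vecs^{t-1}}}}$ (matching, after the obvious reindexing, the summand appearing in the proposition). With properties (a), (c) and Market-Perturbation in hand, the first bound of Theorem~\ref{thm:meta} applies verbatim with $1-\delta = \mod{\lambda_2}$, yielding $\Phi(\vecs^T,\vecll^T) \le \mod{\lambda_2}^T\cdot\Phi(\vecs^0,\vecll^0) + \sum_{t=1}^T \mod{\lambda_2}^{T-t}\cdot\Delta^t$, which is the claimed inequality.

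The step I expect to take the most care is not invoking the meta-theorem but the bookkeeping that makes the invocation legitimate: one must arrange the round so that the contraction \eqref{eqn:err-bound} is applied under the current speed \emph{before} the speed is perturbed, so that the composition ``evolve under $\vecs^{t-1}$, then switch to $\vecs^t$'' reproduces the recurrence $\Phi(\vecs^t,\vecll^t)\le(1-\delta)\,\Phi(\vecs^{t-1},\vecll^{t-1})+\Delta^t$ used in the proof of Theorem~\ref{thm:meta}; and one must ensure that the eigenvalue contraction \eqref{eqn:err-bound} and the perturbation bound are expressed in the same ($L_1$) norm so that they compose cleanly. These are routine once the timing and norm conventions are fixed.
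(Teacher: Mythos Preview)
Your proposal is correct and follows essentially the same approach as the paper: the paper's proof is a one-line appeal to the contraction $G_{\vecs}^{t+1}\le\mod{\lambda_2}G_{\vecs}^{t}$ together with Theorem~\ref{thm:meta}, and you have simply unpacked this by verifying the PLCLDS hypotheses explicitly and identifying $\Delta^t$ via the preceding perturbation proposition. Your added caution about the norm in which \eqref{eqn:err-bound} holds is well placed, since the paper declares all norms to be $L_1$ while the spectral contraction for a symmetric stochastic matrix is naturally an $L_2$ statement.
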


\begin{proof}
The result follows from the fact that $G_{\vecs}^{t+1} \leq \mod{\lambda_2} G_{\vecs}^{t}$ and Theorem~\ref{thm:meta}.
\end{proof}
Since $\Phi$ is a measure of load imbalance in the network in terms of finishing times, the above theorem implies that if the change in the speed vectors across rounds is small, then the imbalance at time $T$ is small and depends largely on the most recent changes.
% !TEX root = main.tex

\section{Generalization to PLCLDS using Bregman Divergence}
\label{app:Bregman}

Recently, in the context of mirror descent on convex functions,
Cheung, Cole and Tao~\cite{CCT2018} proposed a general framework for demonstrating linear convergence
when the underlying function is \emph{strongly Bregman convex}, a new generalization of the standard notion of strong convexity.
% \pn{What is this exactly?}
They use it to analyze \emph{proportional response dynamic} (PRD) in static Fisher markets.
We propose a variant of PLCLDS based on their framework that uses the generalization of Bregman divergences as distance measure.

Let $C$ be a compact and convex set. Given a differentiable convex function $h(\bbx)$ with domain of $C$,
the \emph{Bregman divergence} generated by the \emph{kernel} $h$ is denoted by $d_h$, defined as
$$
d_h(\bbx,\bby) ~=~ h(\bbx) ~-~ \left[~h(\bby) ~+~ \inner{\nabla h(\bby)}{\bbx-\bby}~\right],~~~~\forall \bbx\in C~\text{and}~\bby~\in \rint(C),
$$
where $\rint(C)$ is the relative interior of $C$. % We note that in general, $d_h$ is asymmetric, i.e., $d_h(\bbx,\bby) \neq d_h(\bby,\bbx)$.
A convex function $f$ is $(\sigma,L)$-strongly Bregman convex w.r.t.~Bregman divergence $d_h$ if, $0<\sigma\le L$, and for any $\bby\in \rint(C)$
and $\bbx\in C$,
$$
f(\bby) + \inner{\nabla f(\bby)}{\bbx-\bby} + \sigma\cdot d_h(\bbx,\bby)~~\le~~ f(\bbx) ~~\le~~ f(\bby) + \inner{\nabla f(\bby)}{\bbx-\bby} + L\cdot d_h(\bbx,\bby).
$$

Note that the KL divergence used in the analysis of proportional response dynamics
in Section~\ref{sect:damped-PR} is an instance of Bregman divergence where the kernel function is $h(\bbx) = \sum_j (x_j \cdot \ln x_j - x_j)$.
If the system is static, the variant of PLCLDS satisfies properties (a) and (b), with property (c) replaced by the following new property:
there exists positive numbers $q_1<q_2$
%a \emph{decay parameter} $\delta > 0$
such that for any $\vecp,\vecps\in C$,
\begin{equation}\label{eq:strong-Bregman}
G(F(\vecp))
~\le~ q_1 \cdot d_h(\vecps,\vecp) ~-~ q_2 \cdot d_h(\vecps,F(\vecp)).
\end{equation}
The above property holds when, for instance, $G$ is a $(\sigma,L)$-strongly Bregman convex function with minimum value zero, % \pn{not defined}
and $F$ is a mirror descent update:
$$
F(\vecp) ~=~ \argmin_{\vecp '} ~\left\{~\inner{\nabla G(\vecp)}{\vecp ' - \vecp} + L\cdot d_h(\vecp ',\vecp)~\right\},
$$
for which $q_1 = L-\sigma$ and $q_2 = L$.~\cite{CCT2018}

By a suitable telescoping with \eqref{eq:strong-Bregman}, it is easy to show that $G(\vecp^T)~\le~ q_1 \cdot \left( q_1 / q_2 \right)^{T-1} \cdot d_h(\vecps,\vecp^0)$,
where $\vecps$ is \emph{any} fixed point (equilibrium) of the Lyapunov system.
However, for system which is dynamic, we will need a modification of the above property, presented in the theorem below.

\begin{theorem}
\label{thm:PLCLDS-variant}
Let $\calL$ be a PLCLDS with $\delta\equiv \delta(\calL) > 0$.
Let $\vecs^0,\vecs^1,\ldots,\vecs^T$  and $\bbp^{*,0},\bbp^{*,1},\ldots,\bbp^{*,T}$ denote the sequence of system parameters and fixed points
at times $0,1,\cdots,T$, respectively and let $\Phi(\vecs^t,\vecp^t) = G_{\vecs^t}(\vecp^t)$.
Suppose that for every $t=1,\ldots,T$ the system parameters
$\vecs^{t-1},\vecs^t \in\rr^d$ invoke a change such that the fixed points change from $\bbp^{*,t-1}$ to $\bbp^{*,t}$,
and:
$$
\Phi(\vecs^t,\vecp^t) ~\le~ q_1 \cdot d_h(\bbp^{*,t-1},\vecp^{t-1}) ~-~ q_2 \cdot d_h(\bbp^{*,t},\vecp^{t}) ~+~ \Delta^t ,
$$
If the initial control variable vector is denoted by $\vecp^0$, and the system evolves such that for every $t\ge 1$ we have $\vecp^t = F_{\vecs_{t-1}}(\vecp^{t-1})$, then $$
\Phi(\vecs^T,\vecp^T)
~~\le~~ q_1 \cdot \left( \frac{q_1}{q_2} \right)^{T-1} d_h(\bbp^{*,0},\vecp^{0}) ~+~ \sum\limits_{i=0}^{T-1} \left( \frac{q_1}{q_2} \right)^i \Delta^{T-i}.
$$
\end{theorem}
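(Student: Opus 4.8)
The plan is to reduce the statement to a one-dimensional linear recurrence in the Bregman divergences $D_t := d_h(\bbp^{*,t},\vecp^t)$, $t=0,1,\ldots,T$, and then iterate. The only structural facts I would invoke are that the Lyapunov function is non-negative, $\Phi(\vecs^t,\vecp^t) = G_{\vecs^t}(\vecp^t) \ge 0$, and that a Bregman divergence generated by a convex kernel is non-negative, so $D_t \ge 0$ for all $t$.

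First I would combine the assumed inequality with $\Phi(\vecs^t,\vecp^t)\ge 0$ to rearrange, for each $t=1,\ldots,T$,
$$q_2 D_t ~\le~ q_1 D_{t-1} + \Delta^t, \qquad\text{i.e.}\qquad D_t ~\le~ \frac{q_1}{q_2} D_{t-1} + \frac{\Delta^t}{q_2}.$$
Writing $r := q_1/q_2$, a routine induction on $t$ then yields
$$D_{T-1} ~\le~ r^{T-1} D_0 ~+~ \frac{1}{q_2}\sum_{t=1}^{T-1} r^{T-1-t}\Delta^t .$$

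Next I would apply the hypothesis once more, this time at $t=T$, and drop the non-positive term $-q_2 D_T$ using $D_T\ge 0$:
$$\Phi(\vecs^T,\vecp^T) ~\le~ q_1 D_{T-1} - q_2 D_T + \Delta^T ~\le~ q_1 D_{T-1} + \Delta^T .$$
Substituting the bound on $D_{T-1}$ and simplifying $q_1\cdot\frac{1}{q_2} r^{T-1-t} = r^{T-t}$, the right-hand side becomes $q_1 r^{T-1} D_0 + \sum_{t=1}^{T-1} r^{T-t}\Delta^t + \Delta^T$. The reindexing $i = T-t$ (so $i$ ranges over $1,\ldots,T-1$ and $\Delta^t = \Delta^{T-i}$), together with $\Delta^T = r^0\Delta^{T-0}$, collapses this into $q_1 r^{T-1} d_h(\bbp^{*,0},\vecp^0) + \sum_{i=0}^{T-1} r^i\Delta^{T-i}$, which is precisely the claimed bound.

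There is no genuine obstacle here: the argument is the dynamic analogue of the static telescoping sketched just before the theorem (which gives $G(\vecp^T)\le q_1(q_1/q_2)^{T-1} d_h(\vecps,\vecp^0)$ for a fixed $\vecps$), now carried out with a \emph{moving} fixed point $\bbp^{*,t}$ and an extra additive perturbation $\Delta^t$ at each step. The only points needing mild care are (i) tracking the index shift between the $\Delta^t$ that appear in the recurrence and the $\Delta^{T-i}$ that appear in the final geometric sum, and (ii) observing that the inequality itself does not actually use $q_1 < q_2$ — that strict inequality is only what makes the leading coefficient $(q_1/q_2)^{T-1}$ decay and gives the bound its ``recovery'' interpretation.
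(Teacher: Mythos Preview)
Your proof is correct and follows essentially the same approach as the paper: derive the recurrence $d_h(\bbp^{*,t},\vecp^t) \le \frac{q_1}{q_2} d_h(\bbp^{*,t-1},\vecp^{t-1}) + \frac{\Delta^t}{q_2}$ from non-negativity of $\Phi$, unroll it down to $t=0$, and then apply the hypothesis once more at $t=T$ while dropping the $-q_2\,d_h(\bbp^{*,T},\vecp^T)$ term. Your write-up is in fact more explicit than the paper's about the index bookkeeping and about which non-negativity facts are being used where.
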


\begin{proof}

For $t\ge 1$, the system is assumed to satisfy the perturbation property,
\begin{equation}
\label{eqn:bregman-perturbation}
\Phi(\vecs^t,\vecp^t) ~\le~ q_1 \cdot d_h(\bbp^{*,t-1},\vecp^{t-1}) ~-~ q_2 \cdot d_h(\bbp^{*,t},\vecp^{t}) ~+~ \Delta^t .
\end{equation}

Note that if $\vecs^{t-1} = \vecs^t$, the above assumption reduces to \eqref{eq:strong-Bregman} since we can have $\bbp^{*,t-1} = \bbp^{*,t}$ and $\Delta^t=0$. Under this assumption, we can bound the Bregman divergence between the control variable $\vecp^t$ and any fixed point $\bbp^{*,t}$ in round $t$ as\[
		d_h(\bbp^{*,t}, \vecp^t) ~\leq~ \frac{q_1}{q_2} d_h(\bbp^{*,t-1}, \vecp^{t-1}) ~+~ \frac{\Delta_t}{q_2}.
\] Since each term $\Phi(\vecs^t,\vecp^t)$ and $d_h(\bbp^{*,t},\vecp^{t})$ is non-negative, the potential at round $T$ by property~(\ref{eqn:bregman-perturbation}) is:
\begin{align*}
\Phi(\vecs^T,\bbp^{*,T}) & ~\leq~ q_1 \cdot d_h(\bbp^{*,T-1},\vecp^{T-1}) ~+~ \Delta^T\\[5pt]
	& ~\leq~ q_1 \left( \frac{q_1}{q_2} d_h(\bbp^{*,T-2},\vecp^{T-2}) + \frac{\Delta^{T-1}}{q_2} \right) ~+~ \Delta^T\\
	& \vdots \\
	& ~\leq~ q_1 \cdot \left( \frac{q_1}{q_2} \right)^{T-1} d_h(\bbp^{*,0},\vecp^{0}) ~+~ \sum\limits_{i=0}^{T-1} \left( \frac{q_1}{q_2} \right)^i \Delta^{T-i}.
\end{align*} where the inequalities follow by recursive application of~\eqref{eqn:bregman-perturbation}.
\end{proof}

\end{document}